\title{Engineering MultiQueues}
\author{Marvin Williams}
\affiliation{%
  \institution{Karlsruhe Institute of Technology}
  \city{Karlsruhe}
  \country{Germany}
}
\email{williams@kit.edu}
\author{Peter Sanders}
\affiliation{%
  \institution{Karlsruhe Institute of Technology}
  \city{Karlsruhe}
  \country{Germany}
}
\email{sanders@kit.edu}
\begin{document}

\begin{abstract}
	Priority queues are used in a wide range of applications, including prioritized online scheduling, discrete event simulation, and greedy algorithms.
	In parallel settings, classical priority queues often become a severe bottleneck, resulting in low throughput.
	Consequently, there has been significant interest in concurrent priority queues with relaxed semantics.
	In this article, we present the \emph{MultiQueue}, a flexible approach to relaxed priority queues that uses multiple internal sequential priority queues.
	The scalability of the MultiQueue is enhanced by buffering elements, batching operations on the internal queues, and optimizing access patterns for high cache locality.
	We investigate the complementary quality criteria of \emph{rank error}, which measures how close deleted elements are to the global minimum, and \emph{delay}, which quantifies how many smaller elements were deleted before a given element.
	Extensive experimental evaluation shows that the MultiQueue outperforms competing approaches across several benchmarks.
	This includes shortest-path and branch-and-bound benchmarks that resemble real applications.
	Moreover, the MultiQueue can be configured easily to balance throughput and quality according to the application's requirements.

	We employ a seemingly paradoxical technique of ``wait-free locking'' that might be of broader interest for converting sequential data structures into relaxed concurrent data structures.
\end{abstract}

\begin{CCSXML}
	<ccs2012>
	<concept>
	<concept_id>10010147.10011777</concept_id>
	<concept_desc>Computing methodologies~Concurrent computing methodologies</concept_desc>
	<concept_significance>500</concept_significance>
	</concept>
	</ccs2012>
\end{CCSXML}

\ccsdesc[500]{Computing methodologies~Concurrent computing methodologies}

\keywords{concurrent data structure, priority queues, randomized algorithms, wait-free locking}

\maketitle

\begin{acks}
	This project has received funding from the \grantsponsor{erc}{European Research Council (ERC)}{} under the European Union’s Horizon 2020 research and innovation program (grant agreement No. \grantnum{erc}{882500}).\\
	\includegraphics[width=4cm]{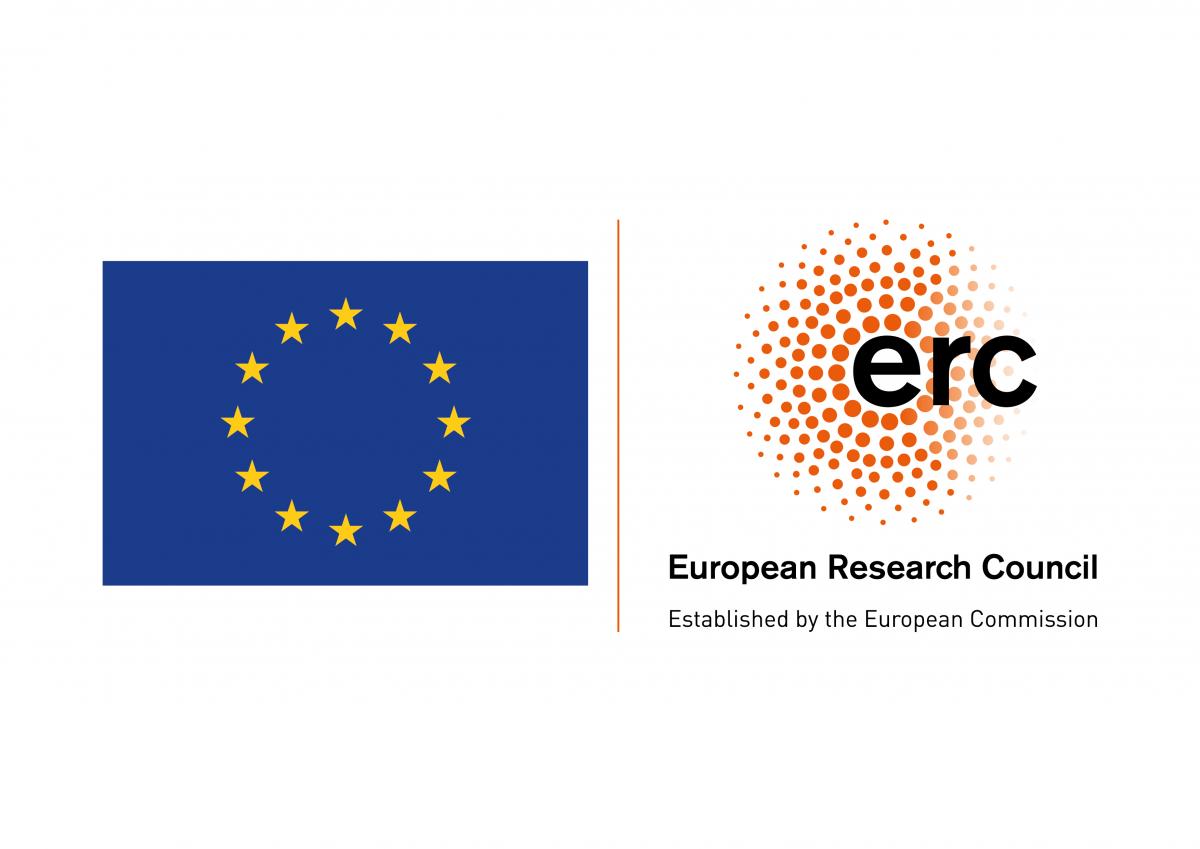}
\end{acks}

\clearpage
\section{Introduction}\label{intro}
The Priority Queue (PQ) is a fundamental\footnote{It is covered in most introductory algorithm lectures and textbooks.} abstract data type that manages a dynamic collection of elements with associated priorities.
PQs support two basic operations: inserting an element and deleting the smallest element.
They are central to many algorithms that benefit from dynamic reordering of operations, including job scheduling, graph searches, discrete event simulation, and best-first branch-and-bound searches.

Sequential PQs become a significant bottleneck in modern multi-core systems when used in parallel versions of these algorithms.
PQs that permit the concurrent insertion and deletion of elements aim to address this issue but encounter semantic ambiguities not present in sequential PQs.
These ambiguities arise when two threads attempt to delete the smallest element simultaneously, or when a deletion happens at the same time as the insertion of a new smallest element.
\emph{Linearizability}~\cite{herlihyLinearizabilityCorrectnessCondition1990} is a strict consistency criterion for concurrent data structures that can resolve these ambiguities.
Linearizability is a highly desirable property, and several linearizable PQs have been proposed (e.g., \cite{lindenSkiplistBasedConcurrentPriority2013, calciuAdaptivePriorityQueue2014,rukundoTSLQueueEfficientLockFree2021}).
Unfortunately, these PQs suffer from limited scalability due to inherent contention on the smallest element \cite{ellenInherentSequentialityConcurrent2012}.
Moreover, parallel applications must handle non-optimal deletions, even when using linearizable PQs.
To illustrate this, consider a parallel version of Dijkstra's algorithm for shortest paths, where multiple threads concurrently delete nodes from the PQ.
Unlike the sequential version, a node cannot be settled upon deletion from the PQ because another thread might simultaneously process a node yielding a shorter path to that node.
This observation motivates the \emph{relaxation} of the semantics of concurrent PQs, intentionally deleting non-optimal elements to improve throughput and scalability.
The idea is that the increased throughput offsets the overhead of performing redundant or unnecessary work.
Quality metrics quantify the degree of relaxation and can help assess the trade-off between quality and performance of a relaxed PQ.

In this article, we present the \emph{MultiQueue}, a relaxed concurrent priority queue.
We also evaluate it empirically and explore practical optimizations of its design.
The MultiQueue distributes insertions and deletions across multiple internal (non-relaxed) priority queues to reduce contention.
While the MultiQueue has a large design space, we focus on balancing scalability and quality.
Specifically, our basic MultiQueue maintains a constant factor more queues than threads, inserts elements into random queues, and deletes elements from the better of two randomly chosen queues.
To support arbitrary internal queues, these queues are protected by mutual exclusion locks.
Despite the locks, the basic MultiQueue is probabilistically wait-free.
It proves to be highly robust to different input distributions and operation sequences, but it has limited scalability due to poor data locality.
We address this issue by using \emph{buffers} for each internal queue and making threads \emph{stick} to a subset of queues for multiple consecutive operations.

The article is organized as follows:
In Section~\ref{preliminaries}, we introduce notations and definitions used throughout this article.
This includes the priority queue abstract data structure as well as the quality metrics \emph{rank error} and \emph{delay}.
After discussing previous work on relaxed concurrent PQs and related topics in Section~\ref{related-work}, we present the MultiQueue as the main contribution of this article in Section~\ref{multiqueue}.
In Section~\ref{analysis}, we analyze the theoretical runtime and quality of the MultiQueue in terms of rank error and delay.
We show that operations on the MultiQueue are nearly as fast as on sequential PQs, scaling linearly with the number of threads $p$.
Assuming a random data distribution, the rank errors and delays of the MultiQueue are expected linear in $p$ and in $\BigO(p\log p)$ with high probability (with parameter $p$).
Additionally, we analyze branch-and-bound algorithms using relaxed PQs.
In Section~\ref{improvements}, we introduce various orthogonal enhancements to the MultiQueue to increase throughput.
In Section~\ref{termination}, we discuss the problem of detecting termination of concurrent algorithms that finish when the PQ runs empty, such as Dijkstra's algorithm.
In Section~\ref{experiments}, we present an experimental evaluation of the MultiQueue, including a comprehensive comparison with alternative approaches found in the literature.
We benchmark synthetic workloads to measure both maximum throughput and quality, as well as an adaptation of Dijkstra's algorithm for shortest paths and a branch-and-bound algorithm for the knapsack problem.
Section~\ref{conclusion} concludes the article and discusses further research regarding the techniques used in the MultiQueue.

\section{Preliminaries}\label{preliminaries}
Given a universe of elements $\mathcal{U}$ and a strict weak ordering $<$ on $\mathcal{U}$, a \emph{priority queue} (PQ) is a set of elements $\mathcal{Q}\subseteq \mathcal{U}$ that supports the following operations:
\begin{description}
	\item[\Op{insert}] Insert an element from $\mathcal{U}$ into $\mathcal{Q}$.
	\item[\Op{delete}] Remove and return a minimal\footnote{\emph{Max-PQs} that delete a maximal element work analogously.} element from $\mathcal{Q}$ according to $<$.
\end{description}
The \emph{rank} of an element $e\in \mathcal{U}$ with respect to a PQ $\mathcal{Q}$ is defined as $r(e) \coloneq |\{e' \in \mathcal{Q} : e' < e\}| + 1$.
For practical purposes, we also define the special element $\Empty$ that is greater than all elements in $\mathcal{U}$ and cannot be inserted.
It is returned by the \Op{delete} operation in case the PQ is empty.
\emph{Concurrent} PQs allow multiple threads to call the \Op{insert} and \Op{delete} operations concurrently.
Concurrent PQs introduce semantic ambiguities that do not arise in sequential PQs.
For example, if multiple threads call \Op{delete} simultaneously, it is unclear which elements should be removed.
Similarly, if a \Op{delete} operation occurs concurrently with an \Op{insert} operation that inserts a new smallest element, it is unclear whether the new element should be considered for deletion.
Linearizability \cite{herlihyLinearizabilityCorrectnessCondition1990} is a widely used consistency criterion for concurrent data structures in relation to sequential semantics.
It requires that each operation appears to take effect instantaneously at some point between its invocation and completion.
However, this guarantee inevitably leads to high contention and limited scalability in linearizable PQ implementations \cite{attiyaLawsOrderExpensive2011}.
To overcome this limitation, the semantics of concurrent PQs can be \emph{relaxed}, allowing the \Op{delete} operation to delete a non-minimal element or \emph{fail} (i.e., return $\Empty$ without deleting an element) even when the PQ is not empty.
Notably, this relaxation imposes no restrictions on when deletions are allowed to fail, making it impossible to determine whether the PQ is truly empty if a deletion fails.
Most implementations of relaxed PQs therefore provide additional guarantees for the \Op{delete} operation.
For example, linearizable PQs are guaranteed to be empty if a \Op{delete} operation, invoked after the completion of the last \Op{insert} operation, fails.

Let $e$ be the element removed by a deletion.
A natural quality metric for the deletion then is the \emph{rank error}, defined as $r(e)-1$, the number of elements in the PQ smaller than $e$.
A complementary metric is the \emph{delay}, defined as the number of deletions of elements greater than $e$ while $e$ is in the PQ.
For convenience, we define a failed deletion to have a rank error equal to the size of the PQ and a delay of zero.
However, it still contributes towards the delay of all elements in the PQ.
Note that a deletion with rank error $r$ delays $r$ elements by one.
Thus, the sum of all delays is equal to the sum of all rank errors after all elements have been deleted.
Nonetheless, the distribution of the delays can be an important performance indicator.
For instance, if the minimum element is crucial for the progress of an algorithm, a PQ that consistently deletes the second-smallest element can lead to arbitrarily slow execution times.
While the rank error will not exceed one, the delay of the minimum element can become arbitrarily large.

An algorithm is said to be \emph{wait-free} if every thread is guaranteed to complete its operation in a bounded number of steps \cite{herlihyArtMultiprocessorProgramming2020}.
In the context of randomized algorithms, we say that an algorithm is \emph{probabilistically wait-free} if the \emph{expected} number of steps until a thread completes its operation is bounded.
Unless otherwise stated, we denote $p$ as the number of threads.
An event occurs \emph{with high probability} (w.h.p) if it occurs with probability at least $1-p^{-a}$ for some constant $a \geq 1$.

\section{Related Work}\label{related-work}

This journal article has a long history.
We introduced the basic idea to use the \emph{two-choice paradigm} for concurrent PQs in the context of a preprint and SPAA brief announcement in 2014 \cite{rihaniMultiQueuesSimplerFaster2014,rihaniMultiQueuesSimpleRelaxed2015}.
A ``proper'' conference paper \cite{williamsEngineeringMultiQueuesFast2021} with considerable improvements like buffering and stickiness took until 2021.
This journal article considers more stickiness variants, greatly expands the experiments (more benchmarks, machines, and competitors), discusses termination detection, and applies relaxed priority queues to parallel branch-and-bound algorithms.
During this long time period, the set of authors changed.

There is now also some follow-up work building on previous publications on the MultiQueue.
\citeauthor{alistarhPowerChoicePriority2017} \cite{alistarhPowerChoicePriority2017,alistarhDistributionallyLinearizableData2018} showed that the expected rank errors are in $\BigO(p)$ and that the expected maximum rank errors are in $\BigO(p\log p)$ if no elements smaller than a previously deleted element are inserted.
\citeauthor{walzerSimpleExactAnalysis2024}~\cite{walzerSimpleExactAnalysis2024} later generalized these results and made them more precise.
\citeauthor{postnikovaMultiqueuesCanBe2022} \cite{postnikovaMultiqueuesCanBe2022} proposed the \emph{stealing MultiQueue} (SMQ) that has thread-local queues but occasionally steals a batch of elements from a random thread.
This is conceptually similar to stickiness in the MultiQueue.
The \emph{Multi Bucket Queue}~\cite{zhangMultiBucketQueues2024} enhances the efficiency of the MultiQueue design by grouping elements with the same priority.
The Multi Bucket Queue additionally employs thread-local buffers to improve throughput.

There is a vast amount of work on sequential priority queues, focusing on different sets of operations and element types.
Binary Heaps \cite{williamsHeapsort1964} are a popular choice in practice, as they are conceptually simple and have logarithmic worst-case running times in the number of elements, with favorable constant factors (at least when fitting in the cache).
A natural generalization of binary heaps are $k$-ary heaps \cite{johnsonPriorityQueuesUpdate1975}---complete $k$-ary trees implicitly represented by an array in which elements are sorted along root-to-leaf paths.
We also adopt heaps in our implementation and note that known PQs with better cache efficiency \cite{oneilLogstructuredMergetreeLSMtree1996,sandersFastPriorityQueues2001} are problematic in our case, as they may occasionally require locking a PQ for a long time.

The oldest ancestor of the MultiQueue is the branch-and-bound algorithm by \citeauthor{karpRandomizedParallelAlgorithms1993} \cite{karpRandomizedParallelAlgorithms1993}.
In their design, each processor owns a local PQ from which it deletes. Insertions go to the PQ of a random other processor.
They show that, assuming synchronous execution, their algorithm expands at most
$2e\approx 5.44$ times as many nodes as the sequential algorithm in expectation.
However, this result does not transfer to the asynchronous setting, as shown by \citeauthor{alistarhPowerChoicePriority2017} \cite{alistarhPowerChoicePriority2017}.
Our measurements confirm empirically that this PQ exhibits rank errors growing not only with $p$ but also with the number of operations.

\citeauthor{sandersRandomizedPriorityQueues1998} \cite{sandersFastPriorityQueues1995,sandersRandomizedPriorityQueues1998} uses a similar design for a bulk-parallel priority queue.
However, the globally best elements are determined using a fast parallel selection algorithm.
This is asymptotically highly efficient but requires global synchronizations that are undesirable in many concurrent applications.
The paper outlines an approach to make bulk-parallel PQs asynchronous by using an asynchronously managed buffer data structure.
However, when this buffer runs empty, $\Theta(p)$ threads need to be activated
to refill the buffer. This seems expensive on current architectures.%
\footnote{For example, in a concurrent hash table \cite{maierConcurrentHashTables2018},  reorganization is supported by having a pool of dedicated hardware threads. But tying up such a scarce resource considerably constrains the application. On the other hand, spawning new threads would incur prohibitive operating system overhead.}

\citeauthor{hubschle-schneiderCommunicationEfficientAlgorithms2016} \cite{hubschle-schneiderCommunicationEfficientAlgorithms2016} improve locality of bulk-parallel PQs by also inserting in the processor's local PQ.
By using a search tree for the local PQs, they can efficiently find the globally best elements in bulk-delete operations.
Since the deleted elements may be distributed unevenly across the processors, explicit data redistribution is required.

\emph{Skip lists} \cite{pughSkipListsProbabilistic1990} have been intensively studied as a foundation for linearizable, lock-free concurrent priority queues (e.g., \cite{shavitSkiplistbasedConcurrentPriority2000,lindenSkiplistBasedConcurrentPriority2013,braginskyCBPQHighPerformance2016}).
They organize elements into a hierarchy of sorted linked lists, where each list is a random sample of the list at the level below, and the lowest level contains all elements.
This hierarchical structure accelerates search operations.
However, linearizable PQs based on skip lists generally suffer from high contention, which limits their scalability.
\citeauthor{rukundoTSLQueueEfficientLockFree2021} \cite{rukundoTSLQueueEfficientLockFree2021} propose the \emph{TSLQueue}, a lock-free linearizable PQ based on search trees and a sorted list.
The TSLQueue empirically outperforms current skip-list-based linearizable PQs.
However, the TSLQueue also suffers from bottlenecks and the worst-case performance is unclear, as its internal search tree can become unbalanced.

The \emph{Spraylist} \cite{alistarhSprayListScalableRelaxed2015} is are relaxed concurrent PQ based on skip lists that mitigates the contention at the list head by \emph{spraying} \Op{delete} operations to elements close to the head.
The spray is a biased random walk down the levels of the underlying skip list.
Insertions can still cause heavy contention.
The spraying yields rank errors in $\BigO(p\log^3 p)$ with high probability.
\citeauthor{wimmerLockfreeKLSMRelaxed2015} \cite{wimmerLockfreeKLSMRelaxed2015} propose the $k$-LSM, a relaxed concurrent PQ combining thread-local PQs and a global concurrent PQ, both implemented as log-structured merge-trees (LSMs) \cite{oneilLogstructuredMergetreeLSMtree1996}.
The local queues are bounded in size, and act as buffers for insertions and for elements ``stolen'' from other local queues.
The analysis of the paper arrives at a rank error bound of $p\cdot k$ for buffers of size $k$ \cite{wimmerLockfreeKLSMRelaxed2015}.
However, it does not account for the possibly long merge-operations of the shared LSM, during which parts of it are inaccessible.
The \emph{contention avoiding priority queue} (CA-PQ) \cite{sagonasContentionAvoidingConcurrent2017} by \citeauthor{sagonasContentionAvoidingConcurrent2017} uses a global concurrent skip list and thread-local insertion and deletion buffers.
The key idea is to dynamically switch between the global PQ and the local buffers depending on the contention.
Under low contention, threads access the global PQ directly, and switch to the local buffers under high contention.
This technique combines high accuracy in non-contended situations with high throughput under contention.
Each thread accesses the global PQ after $m \in \Theta(p)$ operations on its local buffer to ensure that the quality does not deteriorate.
The paper shows that a rank error in $\BigO(p^2)$ is guaranteed for every $m$-th \Op{delete} operation of each thread.\footnote{Such a guarantee could also be achieved with a simplistic version of the MultiQueue in which each thread inserts and deletes from a fixed queue except that every $m$ steps a thread scans all queues for the globally smallest element.}
Unfortunately, no guarantees have been proven for the remaining fraction of $1-1/m$ operations.

Another approach to low-overhead concurrent priority schedulers is grouping tasks into buckets such that tasks in the same bucket have the same priority.
This approach allows for highly efficient data structures for managing the buckets, but is only applicable if the number of ``active'' priorities is small.
One such application is Dijkstra's algorithm with small integer edge weights.
Julienne~\cite{dhulipalaJulienneFrameworkParallel2017} is a framework for parallel graph algorithms where vertices are grouped into priority buckets.
The buckets are processed in increasing order with a synchronization barrier between each bucket.
With monotonically increasing elements, this is equivalent to sequential execution and thus work-efficient.
The work-efficiency comes at the cost of poor parallelism and scalability if the buckets are small or some threads take much longer than others to reach the barrier.
\emph{OBIM}~\cite{nguyenLightweightInfrastructureGraph2013} and \emph{PMOD}~\cite{yesilUnderstandingPrioritybasedScheduling2019} maintain a global priority bucket map, but each thread works on its own version of the map asynchronously.
In OBIM, threads update their buckets from the global map when they encounter an empty bucket.
PMOD enhances OBIM by merging and splitting priority buckets dynamically, if they become too small or too large, respectively.
Both OBIM and PMOD are highly scalable, but the quality of the processed elements can degrade over time \cite{zhangMultiBucketQueues2024}.

\citeauthor{henzingerQuantitativeRelaxationConcurrent2013}~\cite{henzingerQuantitativeRelaxationConcurrent2013} give a general framework for quantitative relaxation in the context of concurrent data structures.
The rank error corresponds to the \emph{out-of-order} relaxation in the framework, and the delay is related to the \emph{lateness} relaxation.
However, the lateness only considers the number of \Op{delete} operations until a smallest element is deleted; delay is non-trivial to express within their framework.
In a follow-up paper, they propose a data structure similar to the MultiQueue for a relaxed FIFO queue \cite{haasDistributedQueuesShared2013}.
\citeauthor{rukundoMonotonicallyRelaxingConcurrent2019}~\cite{rukundoMonotonicallyRelaxingConcurrent2019} introduce the more general concept of \emph{semantic relaxation}.
The idea is to build scalable, relaxed versions of sequential data structures by using multiple data structure instances, and threads reuse the same instance for a number of consecutive operations.

The two-choice paradigm is known to be a very powerful concept for load balancing without priorities \cite{dietzfelbingerSimpleEfficientShared1993,azarBalancedAllocations1999,berenbrinkBalancedAllocationsHeavily2006,mitzenmacherPowerTwoRandom2001}.

\section{Basic MultiQueue}\label{multiqueue}
At the core of the MultiQueue is an array of sequential non-relaxed PQs, each protected by a mutual exclusion lock\footnote{One can also use thread-safe (linearizable) PQs to avoid locking.}.
The number of these \emph{internal} PQs is proportional to the maximum number of threads $p$ that may access the MultiQueue concurrently, resulting in $c\cdot p$ internal PQs for a \emph{queue factor} $c > 1$.
To insert an element, the MultiQueue selects an internal PQ randomly and attempts to lock it, repeating this process until locking succeeds.
Once a PQ is locked, the element is inserted into it and the lock is released.
Using the same strategy for deletions would cause rank errors to degrade not only with $p$ but also with the number of operations (see Figure \ref{fig:pop-pqs} and \cite{alistarhPowerChoicePriority2017}).
To counteract this, the \Op{delete} operation requires more effort: instead of selecting a single PQ, it selects $d\geq 2$ PQs randomly and attempts to lock the one containing the smallest element among them.
Otherwise, deletions follow the same procedure as insertions.

\begin{figure}
	\includegraphics{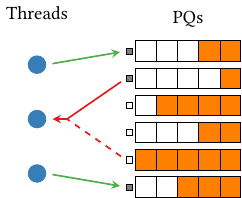}
	\caption{
		Schematic view of the MultiQueue data structure with three threads, a queue factor of $c=2$ and $d=2$ deletion candidates.
		A gray square in front of a PQ indicates that the respective PQ is locked.
		Green and red arrows represent insertions and deletions, respectively.
		The dashed red line indicates the deletion candidate that is not deleted from.
	}
	\label{fig:multiqueue}
\end{figure}
\begin{figure}
	\begin{subfigure}[t]{0.37\textwidth}
		\begin{procedure}[H]
			\DontPrintSemicolon
			\TitleOfAlgo{insert}
			\SetKwData{Array}{$A$}
			\SetKwData{Index}{$I$}
			\SetKwData{Elem}{$e$}
			\SetKwFunction{Rand}{random}
			\SetKwFunction{TryLock}{tryLock}
			\SetKwFunction{Unlock}{unlock}
			\SetKwFunction{Insert}{insert}
			\KwData{Element to insert \Elem}
			\BlankLine
			\Repeat{
				\TryLock{$\Array[\Index]$}
			}{
				$\Index \gets \Rand{$0, c\cdot p - 1$}$\;
			}
			$\Array[\Index].\Insert{\Elem}$\;
			\Unlock{$\Array[\Index]$}
		\end{procedure}
	\end{subfigure}%
	\begin{subfigure}[t]{0.63\textwidth}
		\begin{procedure}[H]
			\DontPrintSemicolon
			\TitleOfAlgo{delete}
			\SetKwData{Array}{$A$}
			\SetKwData{IndexA}{$I_1$}
			\SetKwData{IndexB}{$I_2$}
			\SetKwData{Elem}{$e$}
			\SetKwFunction{Rand}{random}
			\SetKwFunction{TryLock}{tryLock}
			\SetKwFunction{Delete}{delete}
			\SetKwFunction{Min}{min}
			\SetKwFunction{IsEmpty}{empty}
			\KwResult{Removed element or $\Empty$}
			\BlankLine
			\Repeat{
				\TryLock{$\Array[\IndexA]$}
			}{
				$\IndexA \gets \Rand{$0, c\cdot p - 1$}$\;
				$\IndexB \gets \Rand{$0, c\cdot p - 1$}$\;
				\uIf{$\Array[\IndexB].\Min{} < \Array[\IndexA].\Min{}$}{%
					$\IndexA \gets \IndexB$\;
				}
			}
			$\Elem \gets \Array[\IndexA].\Delete{}$\;
			\Unlock{$\Array[\IndexA]$}\;
			\Return \Elem
		\end{procedure}
	\end{subfigure}%
	\caption{
		Pseudocode for the \Op{insert} and \Op{delete} operations with $d=2$.
		The internal PQs are stored in array $A$.
		The \Op{min} function returns a smallest element or $\Empty$ if the PQ is empty.\label{multiqueue-op-code}}
\end{figure}
Figure~\ref{fig:multiqueue} depicts the MultiQueue structure, and Figure~\ref{multiqueue-op-code} gives pseudocode for the \Op{insert} and \Op{delete} operations.
Note that the operations never wait for a lock to be released.
There are always unlocked PQs available, since at most $p$ PQs are locked at any given time.
We emphasize that for deletions, only the ``better'' PQ is locked, which requires that reading the smallest elements of a PQ is thread-safe.
For implementation details, refer to Section~\ref{experiments:implementation}.
The \Op{delete} operation returns $\Empty$ if both selected PQs are empty, optimistically reporting that the MultiQueue is likely empty without considering additional PQs.
A practical implementation might make additional efforts to search for elements when encountering empty PQs to facilitate termination detection.

The motivation for considering multiple PQs for deletion stems from previous work on randomized resource allocation, particularly the balls-into-bins process.
In this process, $n$ balls are successively placed into one of $p$ bins.
The number of balls in a bin is the \emph{load} of this bin, and the normalized load is defined as the load minus the average load across all bins.
Assuming $n\gg p$, the normalized maximum load at the end of this process is in $\Theta(\sqrt{n\log p/p})$ with high probability, provided that each ball is assigned to a bin uniformly at random \cite{raabBallsBinsSimple1998}.
If, instead, the least-loaded of $d\geq 2$ randomly chosen bins is selected for each ball, the normalized maximum load is in $\BigO(\log \log p)$ with high probability \cite{berenbrinkBalancedAllocationsHeavily2006}, yielding an exponential improvement known as the \emph{two-choice paradigm} \cite{mitzenmacherPowerTwoRandom2001}.
\citeauthor{alistarhPowerChoicePriority2017} \cite{alistarhPowerChoicePriority2017} propose a protocol in which, with probability $0 < \beta \leq 1$, two PQs are considered for deletion, and only one otherwise, in order to achieve a smaller effective value of $d=1+\beta$.

In the time window during deletions, between comparing the smallest elements of the selected PQs to determine which PQ to lock and successfully locking it, the comparison can become \emph{stale}, meaning that the smallest element in the selected PQs has changed.
A comparison becomes stale if another thread deletes the smallest element from one of the candidate PQs or inserts a new smallest element into one of them during this time window.
Due to stale comparisons, the wrong PQ may be locked, resulting in a ``bad'' deletion or even a failed deletion, despite the other selected PQs being nonempty.\footnote{This issue does not occur when comparing and locking the PQs atomically, e.g., using transactional memory.}
To avoid stale comparisons, one can lock both PQs before comparing them.
This approach introduces significant overhead and requires a queue factor of $c>2$ to maintain wait-freedom.
Alternatively, one can recompare the PQs after acquiring the lock and retry the operation if the comparison yields a different outcome than before.
Preliminary experiments showed no significant quality improvements from preventing stale comparisons, suggesting that they are rare or have a negligible impact in practice.
\citeauthor{alistarhDistributionallyLinearizableData2018} confirm, for a simplified MultiQueue process, that the quality of deleted elements does not deteriorate due to stale comparisons \cite{alistarhDistributionallyLinearizableData2018}.

\section{Theoretical Analysis}\label{analysis}
In this section, we analyze the theoretical performance and quality of the MultiQueue.
We also analyze the running time of branch-and-bound algorithms when using relaxed concurrent PQs instead of sequential PQs.

\subsection{Running Time}\label{analysis:runtime}
We analyze the asymptotic running time of the \Op{insert} and \Op{delete} operations within a realistic asynchronous shared memory model (specifically, the aCRQW model \cite[Section~2.4.1]{sandersSequentialParallelAlgorithms2019}).
In this model, the time required for performing $p$ contended writes to the same machine word is in $\BigO(p)$.

\begin{theorem}\label{thm:runtime}
	The expected time for a thread to acquire a lock during the \Op{insert} and \Op{delete} operations is in $\BigO(1)$.
\end{theorem}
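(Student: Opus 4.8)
The plan is to bound the expected number of lock-acquisition attempts a thread makes during a single operation, and then argue that each attempt (including the $\BigO(1)$ work of reading minima and calling \Op{tryLock}) costs expected $\BigO(1)$ time in the aCRQW model. The central observation is the one already stated in the text: at most $p$ of the $c \cdot p$ internal PQs can be locked at any instant, so at least $(c-1)\cdot p$ are unlocked. Hence when a thread picks an index uniformly at random from $\{0,\dots,c\cdot p - 1\}$, the probability that the chosen PQ is currently unlocked is at least $(c-1)/c$, a constant strictly between $0$ and $1$ since $c > 1$.

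First I would model the repeated selection as a sequence of Bernoulli trials: each iteration of the \Repeat{} loop succeeds (acquires the lock) with probability at least some constant $q = (c-1)/c > 0$, independently of the random choices made in previous iterations, because the index is drawn afresh each time. Consequently the number of iterations $X$ until the first success is stochastically dominated by a geometric random variable with parameter $q$, giving $\Expec{X} \leq 1/q = c/(c-1) \in \BigO(1)$. For the \Op{delete} operation, each iteration draws $d$ indices (a constant) and performs $d-1$ min-comparisons before attempting the lock, so the per-iteration work is still $\BigO(1)$; the same geometric bound applies because the probability that the \emph{selected} candidate is unlocked is again at least a constant.

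The main obstacle to make rigorous is the independence/adversary subtlety: the set of locked PQs evolves adversarially as other threads act concurrently, so the success probability of a given attempt is not a fixed constant but depends on the (time-varying, adversarially chosen) configuration of locks. The key point is that \emph{regardless} of which $\leq p$ PQs the adversary has locked, the count of unlocked PQs is at least $(c-1)p$, so the conditional success probability of each fresh random draw is at least $(c-1)/c$ \emph{no matter what has happened before}. This uniform lower bound is exactly what licenses the stochastic-domination argument without needing genuine independence; formally one would couple the attempt sequence against an i.i.d.\ geometric process.

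The remaining piece is to convert the bounded expected number of attempts into bounded expected \emph{time}. Here I would invoke the aCRQW cost model: a failed \Op{tryLock} on a contended word, with up to $p$ threads competing, costs $\BigO(p)$ in the worst case for that word, but crucially each thread's attempts target \emph{independently chosen random} locations, so the expected contention on any single attempted word is $\BigO(1)$ (the expected number of the other $p-1$ threads simultaneously probing the same PQ is $(p-1)/(c\cdot p) = \BigO(1)$). Combining the expected $\BigO(1)$ attempts with expected $\BigO(1)$ cost per attempt, and using linearity of expectation (with Wald's identity to handle the random stopping time $X$), yields the claimed $\BigO(1)$ expected time. I expect the contention-accounting step to require the most care, since it is where the specifics of the aCRQW model and the randomness of the probe locations must be combined correctly.
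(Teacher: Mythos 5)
Your proposal is correct and follows essentially the same route as the paper's proof: a geometric bound on the number of locking attempts with success probability at least $(c-1)/c$ (giving expected $c/(c-1)$ attempts), combined with the observation that the expected contention per randomly probed lock is constant in the aCRQW model. You are in fact somewhat more careful than the paper, which simply asserts independence of successive attempts, whereas your stochastic-domination argument via a uniform conditional lower bound (plus Wald's identity) is the rigorous way to handle the adversarially evolving lock configuration.
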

\begin{proof}
	At most $p-1$ internal PQs are locked by other threads at any given time.
	Hence, the worst-case probability $r$ that a randomly selected PQ is locked is
	\[
		r\coloneq\frac{p-1}{cp} \leq \frac{1}{c} < 1.
	\]
	Since the probabilities of subsequent locking attempts to succeed are independent, the probability that a thread needs $i$ attempts to find an unlocked queue is $r^{i-1}(1-r)$.
	The expected number of attempts to find an unlocked queue is therefore
	\[
		\sum_{i=1}^\infty{ir^{i-1}(1-r)}=\frac{1}{1-r}\leq \frac{1}{1-\frac{1}{c}}=\frac{c}{c-1}\in \BigO(1).
	\]
	Since the expected number of threads contending to lock a PQ is constant, the same bound holds for the expected execution time.
\end{proof}
We obtain the following bounds for the comparison-based model and for integer keys, respectively:
\begin{corollary}
	A MultiQueue with binary heaps needs constant average insertion time and expected time $\BigO(\log n)$ per operation for worst-case operation sequences.
	With van Emde Boas trees \cite{vanemdeboasDesignImplementationEfficient1976,mehlhornBoundedOrderedDictionaries1990} and integer keys in $\{0,\ldots,U-1\}$, an expected time of $\BigO(\log\log U)$ per operation is achieved.
\end{corollary}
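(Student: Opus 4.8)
The plan is to treat each MultiQueue operation as the composition of two phases and bound them separately. The first phase is the lock-acquisition loop; the second is the single sequential operation executed on the locked internal PQ. For the first phase, Theorem~\ref{thm:runtime} already guarantees that a thread probes only $\BigO(1)$ queues in expectation before it succeeds. The only addition for \Op{delete} is that each probe reads the minima of $d$ candidate queues before attempting to lock; since $d$ is constant and the minimum of both a binary heap and a van Emde Boas tree is directly accessible in $\BigO(1)$ time (at the heap root, or in the stored summary minimum), each probe still costs $\BigO(1)$. Multiplying the constant per-probe cost by the expected constant number of probes, the first phase costs $\BigO(1)$ in expectation, independently of the internal data structure.

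The second phase is where the choice of internal PQ enters. First I would handle the binary-heap case. A \Op{delete} performs one delete-min, costing $\BigO(\log n)$ in the worst case, and an \Op{insert} performs one heap insertion. Here I would invoke the classical fact that heap insertions take $\BigO(1)$ amortized time on average to obtain the ``constant average insertion time'' claim, while using the $\BigO(\log n)$ worst-case bound of both heap operations for the per-operation guarantee on arbitrary (worst-case) operation sequences. Adding the $\BigO(1)$ expected first-phase cost leaves these bounds unchanged, yielding constant average insertion time and $\BigO(\log n)$ expected time per operation.

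For the van Emde Boas case I would simply substitute the known complexities: \Op{insert}, delete-min, and reading the minimum each run in $\BigO(\log\log U)$ time (in fact the minimum is available in $\BigO(1)$) on a universe of size $U$. Since both phases are then $\BigO(\log\log U)$ (the first being $\BigO(1)$), every operation takes $\BigO(\log\log U)$ expected time.

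The main obstacle is conceptual rather than technical: I must be careful not to silently attribute the ``constant average insertion time'' claim to the $\BigO(1)$ lock-acquisition bound. The constant-time average insertion is a separate, input-averaged property of sequential binary heaps that has to be cited explicitly, whereas the $\BigO(\log n)$ and $\BigO(\log\log U)$ bounds are worst-case per-operation and follow directly once the first phase is shown to be asymptotically negligible. A secondary point to verify is that reading the minima during the \Op{delete} probes does not inflate the first phase; this hinges only on $\BigO(1)$ access to the minimum, which both candidate data structures provide.
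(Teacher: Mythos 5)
Your proposal is correct and matches the paper's (implicit) argument: the paper states this corollary as an immediate consequence of Theorem~\ref{thm:runtime}, i.e., the expected $\BigO(1)$ lock-acquisition cost is added to the standard sequential bounds for binary heaps (constant average insertion, $\BigO(\log n)$ worst case) and van Emde Boas trees ($\BigO(\log\log U)$). Your care in attributing the ``constant average insertion'' to the sequential heap property rather than to the locking bound, and in noting that reading minima costs $\BigO(1)$ per probe, is exactly the right reading.
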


Theorem~\ref{thm:runtime} holds even when other threads are suspended, since in the worst case, each suspended thread holds one lock.
The overhead of attempting to acquire these locks is already accounted for in the analysis.
As a result, each thread is expected to make progress in a bounded number of steps, making the MultiQueue probabilistically wait-free.
Note that ``progress'' in this context refers to completing an \Op{insert} or a (potentially failing) \Op{delete} operation, rather than the completion of an application using the MultiQueue.
Specifically, if a thread is suspended while holding a lock on an internal PQ, the elements in that PQ become inaccessible to other threads.
This can potentially delay an application's progress indefinitely, rendering it not wait-free.

\subsection{Quality}\label{analysis:quality}
A general quality analysis of the MultiQueue is still an open problem.
\citeauthor{alistarhPowerChoicePriority2017} were the first to provide an asymptotic analysis of the rank error in a simplified sequential MultiQueue process, where no elements smaller than any previously deleted element are inserted.
They show that the expected rank error is in $\BigO(p)$ and the expected maximum rank error is in $\BigO(p\log p)$ for any number of deletions.
They later showed that these bounds hold even in a concurrent setting with stale comparisons \cite{alistarhPowerChoicePriority2017,alistarhDistributionallyLinearizableData2018}.
\citeauthor{walzerSimpleExactAnalysis2024} \cite{walzerSimpleExactAnalysis2024} present an exact analysis of the rank error distribution for the simplified (sequential) MultiQueue process with more general deletion strategies.
The expected long-term rank error for $d=2$ turns out to be $\tfrac{5}{6}cp-1+\tfrac{1}{6cp}$.
Both \citeauthor{alistarhPowerChoicePriority2017} \cite{alistarhPowerChoicePriority2017} and \citeauthor{walzerSimpleExactAnalysis2024} \cite{walzerSimpleExactAnalysis2024} confirm that the distribution of elements to the PQs does not stabilize for $d=1$, leading to diverging rank errors (see Figure~\ref{fig:pop-pqs}).

In this section, we will explain how \emph{rank errors} and \emph{delays} can be estimated under simplified but intuitive assumptions.
In the following, we only consider the sequential execution of operations.
We assume that, at any given time, each element is equally likely to be in any queue.
This assumption holds at least until the first \Op{delete} operation is executed.
The reasoning behind this assumption is as follows:
Insertions place elements into random queues, which moves the system toward a uniform distribution of elements.
Deletions, on the other hand, select PQs with small elements more frequently than those with large elements, thus controlling the deviation among the queues.
\citeauthor{walzerSimpleExactAnalysis2024} \cite{walzerSimpleExactAnalysis2024} show that this assumption does not generally hold but conjecture that the distribution of elements will never be ``too far off'' the uniform distribution, even for arbitrary operation sequences.

\paragraph{Rank error}
Let $R$ be the random variable indicating the rank error of a \Op{delete} operation.
This means that there are $R$ elements smaller than the deleted element, and all of those elements are in PQs other than the ones considered for the \Op{delete} operation.
Under our assumptions, the probability that a specific element is in one of the selected PQs is $s\coloneq\tfrac{d}{cp}$, independent of the other elements.
Thus, $R$ follows a geometric distribution with parameter $s$:
\[
	\textrm{P} (R = i) = \left(1-s\right)^is.
\]
Since $\textrm{E}[R] = \tfrac{1}{s}-1$, the expected rank error is in $\BigO(p)$.

The probability to delete an element with rank error $i$ or greater is given by
\[
	\textrm{P} (R \geq i) = \left(1-s\right)^i.
\]
For a rank error $l=\tfrac{1}{s}\log p\in\BigO(p\log p)$, we have $\textrm{P}(R \geq l) = \left(1-s\right)^l \approx e^{-s\cdot l} = p^{-1}$.
Therefore, rank errors are in $\BigO(p\log p)$ with high probability.

\paragraph{Delay}
Since deleting an element with rank error $i$ delays $i$ elements, the sum of all delays is equal to the sum of all rank errors after all elements are deleted.
Therefore, the expected delay is equal to the expected rank error.
Consider an element $e$ with rank $i+1$.
The probability for a deletion to remove an element larger than $e$, thus increasing the delay of $e$ by one, is $\textrm{P} (R > i)$.
With probability $\textrm{P} (R = i)$, the \Op{delete} operation removes $e$.
Hence, the probability that a \Op{delete} operation either increases the delay of $e$ or removes $e$ is
\[
	\textrm{P} (R > i) + \textrm{P} (R = i) = \textrm{P} (R>=i).
\]
The conditional probability that a \Op{delete} operation, which would otherwise delay $e$, instead removes $e$ is
\[
	\textrm{P} (R = i \mid R \geq i) = \frac{\textrm{P} (R = i)}{\textrm{P} (R \geq i)} = \frac{(1-s)^is}{(1-s)^i}=s.
\]
Since this probability is independent of $i$, the delay also follows a geometric distribution.
In fact, the delay distribution is identical to the rank error distribution.
Consequently, the expected delay is in $\BigO(p)$ and, with high probability, the delay is in $\BigO(p\log p)$.

\subsection{Branch-and-bound}
In this section, we analyze the run-time of the best-first branch-and-bound scheme using a relaxed concurrent PQ.
Our approach is similar to the analysis for bulk-parallel priority queues \cite{sandersFastPriorityQueues1995}, but leverages the expected delay to estimate the overhead introduced by the relaxation.

Branch-and-bound algorithms solve optimization problems by exploring and pruning the search tree of possible (partial) solutions.
Given the currently best solution as an upper bound for the optimal (minimal) solution, partial solutions with lower bounds larger than the currently best solution are pruned.
The best-first heuristic selects the partial solution with the smallest lower bound for further exploration.
This heuristic is typically implemented by a PQ managing partial solutions ordered by their lower bound.

Let $H$ be the search tree of partial solutions, and let $H_{\leq}\subseteq H$ be the subset of partial solutions whose lower bounds are not larger than the optimal solution $s$.
A sequential best-first branch-and-bound algorithm might explore all of $H_{\leq}$ while pruning all partial solutions in $H_{>}\coloneq H\setminus H_{\leq}$.
Partial solutions in $H_{>}$ cannot be explored before $s$ is found, since their lower bounds are larger than $s$.
After $s$ is found, all partial solutions in $H_{>}$ are pruned.
However, with relaxed PQs, some nodes from $H_{>}$ might be explored before they can be pruned.
\begin{figure}
	\includegraphics[width=0.4\textwidth]{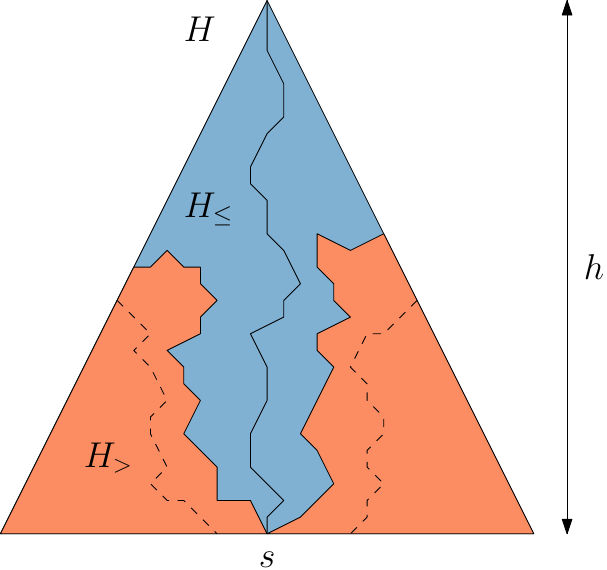}
	\caption{
		Schematic view of the search tree $H$ with height $h$ with nodes in $H_{\leq}$ and $H_{>}$.
		The path from the root to the optimal solution $s$ is highlighted.
		The dashed lines bound the area of $H_{>}$ that are explored due to the delay.
	}
	\label{fig:bnb}
\end{figure}
Figure~\ref{fig:bnb} illustrates the search tree and the nodes that are unnecessarily explored due to the delay.
Let $h$ be the length of the path $S$ from the root of $H$ to $s$ and $D$ the expected delay of the relaxed PQ.
Until $s$ is found, the PQ always contains at least one partial solution from $S$.
Each deletion from $H_{>}$ delays all partial solution on $S$ that are in the PQ at the time of the deletion.
Since each partial solution in $S$ is delayed by $D$ in expectation, at most $Dh$ nodes from $H_{>}$ are expected to be explored before $s$ is found.
In total, $n' \leq H_{\leq} + Dh$ nodes are explored in expectation.
Assuming that PQ operations and processing a node take $\BigO(\log n)$ time, the expected parallel run-time $T_\text{par}$ with $p$ threads is
\[
	T_\text{par} = \frac{T_{\text{seq}}}{p} + \BigO\left(\left(\frac{H_{\leq} + Dh}{p}\right)\log n\right).
\]
For the MultiQueue, $D$ is in $\BigO(p)$, leading to a total expected run-time of
\[
	T_\text{par} = \tfrac{T_{\text{seq}}}{p} + \BigO\left(\left(\tfrac{H_{\leq}}{p}+h\right)\log n\right).
\]
Note that this bound is essentially the same as proven for bulk-parallel PQs \cite{sandersFastPriorityQueues1995}.
However, a crucial difference is that previously, the bound applied only when execution times for node processing were always the same.

\section{Practical Improvements}\label{improvements}
In this chapter, we introduce practical improvements to the MultiQueue design as presented in Section~\ref{multiqueue}.
Even for small numbers of elements, cache efficiency is an issue for the MultiQueue.
By operating on a specific internal PQ, a thread will move the corresponding cache lines into its cache.
Most likely, this PQ will be accessed by other threads next, leading to cache misses for these threads as well as invalidation traffic due to cache coherency.
We address this issue and present techniques to significantly improve the cache efficiency of the MultiQueue in the following sections.

\subsection{Buffering}\label{multiqueue-buffering}
To reduce the average number of cache lines accessed by the \Op{insert} and \Op{delete} operations, we enhance each internal PQ with an \emph{insertion buffer} and a \emph{deletion buffer} of fixed capacity.
Generally, insertion go into the insertion buffer, and deletions are performed on the deletion buffer.
We maintain the invariants that the smallest elements are in the deletion buffer in sorted order, and if the deletion buffer is empty when trying to delete an element, the insertion buffer and the PQ are also empty.
Therefore, small elements have to be inserted into the deletion buffer directly.
If the insertion buffer is full, its elements are flushed into the PQ.
If the deletion buffer becomes empty, it is refilled with the smallest elements from both the insertion buffer and the PQ.
This is done by first flushing the insertion buffer into the PQ and then refilling the deletion buffer solely from the PQ.
\begin{figure}
	\begin{subfigure}[t]{0.5\textwidth}
		\begin{procedure}[H]
			\DontPrintSemicolon
			\TitleOfAlgo{insertWithBuffers}
			\SetKwData{PQ}{$Q$}
			\SetKwData{Elem}{$e$}
			\SetKwData{Tmp}{$e'$}
			\SetKwData{D}{$D$}
			\SetKwData{I}{$I$}
			\SetKwData{CapI}{$C_I$}
			\SetKwData{CapD}{$C_D$}
			\SetKwFunction{Max}{max}
			\KwData{Element to insert \Elem}
			\BlankLine
			\uIf{$\I = \emptyset \land \PQ = \emptyset \land |\D| < \CapD$}{
				$\D \gets \D \cup \{\Elem$\}\;
				\Return
			}
			\uIf{$\Elem < \Max{\D}$}{
				\uIf{$|\D| < \CapD$}{%
					$\D \gets \D \cup \{\Elem$\}\;
					\Return
				}
				$\Tmp \gets \Max{\D}$\;
				$\D \gets (\D \setminus \{\Tmp\}) \cup \{\Elem\}$\;
				$\Elem \gets \Tmp$\;
			}
			\lIf{$|\I| = \CapI$}{flush \I into \PQ}
			$\I \gets \I \cup \{\Elem\}$
		\end{procedure}
	\end{subfigure}%
	\begin{subfigure}[t]{0.5\textwidth}
		\begin{procedure}[H]
			\DontPrintSemicolon
			\TitleOfAlgo{deleteWithBuffers}
			\SetKwData{PQ}{$Q$}
			\SetKwData{Elem}{$e$}
			\SetKwData{D}{$D$}
			\SetKwData{I}{$I$}
			\SetKwFunction{Min}{min}
			\KwResult{Removed element or $\Empty$}
			\BlankLine
			\lIf{$\D = \emptyset$}{\Return $\Empty$}
			$\Elem \gets \Min{\D}$\;
			$\D \gets \D \setminus \{\Elem\}$\;
			\uIf{$\D = \emptyset$}{
				flush \I into \PQ\;
				refill \D from \PQ\;
			}
			\Return \Elem
		\end{procedure}
	\end{subfigure}%
	\caption{Pseudocode for inserting into and deleting from a locked PQ $Q$ with insertion buffer $I$ and deletion buffer $D$.
		The buffers have capacities $C_I$ and $C_D$, respectively.
		The \Op{max} and \Op{min} operations return a largest and smallest element or $\Empty$ if the set is empty, respectively.
		Refilling $D$ from $Q$ is done by iteratively deleting the smallest element from $Q$ and inserting it into $D$ until $D$ is full or $Q$ is empty.
	}\label{multiqueue-buffer-code}
\end{figure}
Figure~\ref{multiqueue-buffer-code} shows the pseudocode for the \Op{insertWithBuffers} and \Op{deleteWithBuffers} operations.
For implementation details of the buffers, refer to Section~\ref{experiments:implementation}.
An alternative approach to refill the deletion buffer is to refill from the PQ first, and then swap elements with the insertion buffer to re-establish the invariant that the smallest elements are in the deletion buffer.
This approach has the advantage that all interactions between the buffers and the PQ are in batches of fixed size, which can be exploited by specialized PQs (see Section~\ref{multiqueue-pqs}).
However, it incurs additional moves of elements between the buffers and the PQ.

With buffers, insertions and deletions interact with the internal PQ itself only when the buffers are full or empty.
An insertion typically reads the largest element from the deletion buffer to determine whether the element should be inserted into the deletion buffer directly, and otherwise operates only on the insertion buffer.
Deletions typically only access the deletion buffer.
On the one hand, buffering can reduce cache misses, since only one thread accesses the PQ in a bulk-fashion, exhibiting high temporal locality without interference from other threads.
Implicit heaps, for example, exhibit high locality for insertions, and at least exhibit some locality near the root and at the rightmost end of the bottom layer of the tree for deletions.
On the other hand, buffering increases the worst-case time for each operation.
If an operation locks a PQ that contains small elements for a long time, the delay of those elements, as well as the rank error of other deletions, increases.

\subsection{Cache-efficient PQs}\label{multiqueue-pqs}
Implicit tree-like data structures like binary heaps can be used to avoid cache misses from indirection within the internal PQs.
$k$-ary heaps with $k>2$ improve on binary heaps by reducing the number of cache misses to $\BigO(\log_kn)$ if $k$ elements fit into one cache line.
To fully exploit the fact that the MultiQueue accesses the internal PQ in a bulk-fashion due to buffering, data structures that directly support batch operations can also be utilized.
A promising data structure is the \emph{merging binary heap}, an adaptation of the \emph{parallel heap} by \citeauthor{deoParallelHeapOptimal1992}~\cite{deoParallelHeapOptimal1992}.
Merging binary heaps are structured like binary heaps, but each node contains a fixed number of sorted elements.
The heap invariant is that the first element of each node is not smaller than the last element of its parent node.
Insertion and deletion work similarly to ordinary binary heaps.

New nodes are inserted as a leaf.
Then, the node is merged with its parent, moving the smaller half into the parent node and the larger half into the child node.
This process is repeated with the parent of the current node until the first element of the node is not smaller than the last element of its parent or the root is reached.
When the root node is deleted, the two child nodes are merged and the smaller half is put into the parent node, the larger half into the child node which had the larger last element.
Then, the other child node is empty, and the process is repeated with this child until a leaf is reached.
Finally, the last node is moved to this leaf, and the heap invariant is restored analogously to the insertion process.
While merging binary heaps require fewer tree operations than $k$-ary heaps, they come with additional algorithmic complexity and higher worst-case access times.
In our preliminary experiments, merging binary heaps and $k$-ary heaps performed very similarly, so we decided to use the conceptually simpler $k$-ary heaps.

\subsection{Stickiness}\label{multiqueue-sticky}
Stickiness is a simple yet effective technique to increase temporal cache locality at the cost of potentially higher rank errors and delays.
The basic idea is as follows:
Threads stick to the same set of $d\geq 2$ internal PQs for $s\geq 1$ consecutive operations (the \emph{stickiness period}).
Insertions pick one PQ from the set at random.
Deletions use the PQ with the smallest element among the set.
PQs are still only locked for single operations.
Each thread replaces its set with a new set of $d$ PQs after $s$ consecutive operations or if locking fails for one operation (to avoid blocking).
We assume $d \leq c$ to ensure that there are always enough PQs for each thread to stick to.

In the \emph{simple} variant, the PQs are chosen uniformly at random, independently for each thread.
Thus, multiple threads may stick to the same PQs simultaneously, which is undesirable because of contention and cache invalidations.
To mitigate this issue, threads can leave a thread-specific mark in the lock of their most recently accessed PQ.
Then, they can detect if they try to lock a PQ that another thread recently accessed, and choose a different one to stick to.

The \emph{swap} variant helps guiding threads to currently unused PQs by maintaining a permutation of PQ indices in a global array.
Each thread is assigned $d$ fixed positions in this array, indicating the PQs to stick to.
Whenever a thread wishes to replace its set of PQs, it atomically swaps the indices in each of its array positions with random other positions.
Thus, no two threads stick to the same PQ simultaneously.
Still, threads need to acquire locks for each operation because there might be ongoing operations from other threads on the assigned PQs.
\begin{figure}
	\begin{procedure}[H]
		\DontPrintSemicolon
		\TitleOfAlgo{atomicSwap}
		\SetKwData{I}{$i$}
		\SetKwData{J}{$j$}
		\SetKwData{P}{$a_i$}
		\SetKwData{Q}{$a_j$}
		\SetKwData{A}{$A$}
		\SetKwData{Unavailable}{$-1$}
		\SetKwFunction{CAS}{compareAndSwap}
		\SetKwFunction{Exchange}{atomicExchange}
		\SetKwFunction{Rand}{random}
		\SetKw{Continue}{continue}
		\KwData{Index to swap from \I}
		\BlankLine

		$\P \gets \Exchange{$\A[\I],\Unavailable$}$\;
		\Repeat{$\Q \neq \Unavailable \land \CAS{$\A[\J],\Q,\P$}$}{
			$\J \gets \Rand{$0, c\cdot p - 1$}$\;
			$\Q \gets \A[\J]$\;
		}
		$\A[\I] \gets \Q$\;
	\end{procedure}
	\caption{Pseudocode for the atomically swapping an entry $i$ with another randomly chosen entry.
		The operation $\Op{atomicExchange}(A,v)$ atomically reads the value at $A$ and sets it to $v$, the operation $\Op{compareAndSwap}(A,e,x)$ atomically compares the value at $A$ with $e$ and sets it to $x$ if they are equal.
	}
	\label{atomic-swap}
\end{figure}
Since atomic swaps are not natively supported on most hardware, we use the algorithm given in Figure~\ref{atomic-swap} to atomically swap entries in the permutation array.
It is crucial for the correctness of the algorithm that each entry can only be invalidated (set to $-1$) by one specific thread.
The algorithm is probabilistically wait-free because valid entries are found in expected constant time.
Our preliminary experiments showed that the marking approach is inferior to the \emph{swap} variant.

Another approach we did not investigate further is to induce the assignment using a global permutation of the form $\pi(i)=i\cdot a+b\bmod m$ with parameters $a \in \mathbb{N}$ and $b \in \mathbb{N}_0$ such that $a$ and $m=c\cdot p$ are co-prime.\footnote{Since $a$ and $m$ are co-prime, there exists a unique $a^{-1}$ such that $aa^{-1}\equiv 1 \mod m$.
For $\pi(i)=\pi(i')$ it must be $ia+b\equiv i'a+b \mod m\Rightarrow ia\equiv i'a \mod m\Rightarrow iaa^{-1}\equiv i'aa^{-1} \mod m\Rightarrow i\equiv i' \mod m$.}
The parameters $a$ and $b$ are shared between all threads.
Each thread is assigned fixed unique indices $i,\ldots,i+d-1\in [0,m-1]$, and sticks to the PQs $\pi(i),\ldots,\pi(i+d-1)$.
This approach avoids the rather expensive swapping in the permutation array, but it changes the PQ assignments for all threads simultaneously.
The threads have to agree when to change the permutation, which can be challenging in practice.

\section{Termination Detection}\label{termination}
\emph{Termination detection} is the process of determining when an algorithm has completed its task.
Many algorithms involving priority queues repeatedly delete an element from the PQ, process it (thereby possibly inserting new elements), and terminate after the PQ becomes empty.
While it is trivial to detect this termination condition in sequential settings, it can be challenging in parallel settings.\footnote{
	Termination (or \emph{quiescence}) detection  is a well-studied problem in distributed systems.
	For an overview, see \cite{matochaTaxonomyDistributedTermination1998} and \cite{matternAlgorithmsDistributedTermination1987}.}
The goal is to ensure that all threads keep working until the PQ is permanently empty and then terminate.
More precisely, the termination detection algorithm should satisfy the following properties:
\begin{enumerate}
	\item Each thread must repeatedly attempt to delete an element and process it until the PQ is empty and no new elements will be inserted.
	\item If the PQ is empty and no new elements will be inserted, all threads eventually terminate.
\end{enumerate}
Checking a potential concurrent \Op{isEmpty} operation generally does not suffice (even if linearizable) because a thread may observe an empty PQ while other threads are still processing elements.
Thus, termination detection requires some kind of coordination between the threads.
If the total number of elements to process is known in advance, each thread can maintain a local counter of processed elements.
When a thread fails to delete an element, it updates a global atomic counter of processed elements.
Since this global counter never overestimates the number of processed elements, threads can terminate as soon as the counter reaches the total number of elements to process.
However, if the total number of elements to process is unknown, more sophisticated coordination is required.
Tracking the number of remaining elements to process with a global atomic counter is infeasible in high-throughput scenarios due to the high contention on that counter.
\citeauthor{maierConcurrentHashTables2018}~\cite{maierConcurrentHashTables2018} propose a more scalable version of this approach in the context of approximating the size of concurrent hash tables, where each thread counts the number of its insertions and deletions, and update a global counter sporadically.
However, this counter can become zero despite some elements remain to be processed, making it unsuitable for termination detection.
\citeauthor{ellenSNZIScalableNonZero2007}~\cite{ellenSNZIScalableNonZero2007} propose a scalable concurrent non-zero indicator for the number of remaining elements to process, incurring a small overhead per operation.
This indicator is especially useful in scenarios where the PQ becomes empty frequently.
However, relaxed concurrent PQs are most useful in scenarios where many elements are in the PQ for most of the time and processing elements is very fast, making this approach less suitable.
If there are few elements to process most of the time, faster unordered data structures might be more suitable than relaxed PQ.

Another approach is to periodically synchronize all threads and determine whether all elements have been processed.
Since global synchronization is expensive, this approach must be carefully designed to avoid excessive synchronization while ensuring that all threads terminate in a timely manner.

\begin{figure}
	\begin{procedure}[H]
		\DontPrintSemicolon
		\TitleOfAlgo{ProcessUntilEmpty}
		\SetKwBlock{Repeat}{repeat}{end}
		\SetKwData{NumThreads}{$p$}
		\SetKwData{PollingCount}{polling}
		\SetKwData{IdleCount}{idle}
		\SetKwData{PQ}{$Q$}
		\SetKwData{e}{$e$}
		\SetKwFunction{Process}{process}
		\SetKw{Goto}{goto}
		\SetKw{Break}{break}
		\SetKw{Continue}{continue}
		\SetKwFunction{Del}{delete}
		\KwData{Priority queue \PQ}
		\KwData{Number of threads \NumThreads}
		\KwData{Global atomic counters \PollingCount and \IdleCount}
		\BlankLine
		\nl \Repeat{
			\nl $\e \gets \PQ.\Del{}$\label{working}\;
			\nl \If{$\e = \Empty$}{
				\nl $\PollingCount \gets \PollingCount+1$\label{enterpolling}\;
				\nl \Repeat{
					\nl 	$\e \gets \PQ.\Del{}$\label{trypop}\;
					\nl 	\lIf{$\e \neq \Empty$}{
						\Break
					}
					\nl 	\lIf{$\PollingCount < \NumThreads$}{\Continue\label{skipidle}}
					\nl 	$\IdleCount \gets \IdleCount+1$\label{enteridle}\;
					\nl 	\While{$\PollingCount = \NumThreads$}{
						\nl 		\lIf{$\IdleCount = \NumThreads$}{
							\Return\label{terminate}
						}
					}
					\nl 	$\IdleCount \gets \IdleCount-1$\;
				}
				\nl $\PollingCount \gets \PollingCount-1$\;
			}
			\nl\tcp{process \e}\label{process}
		}
	\end{procedure}
	\caption{Pseudocode for termination detection.
		\texttt{polling} and \texttt{idle} are global atomic counters, initialized to 0.
	}
	\label{termination-code}
\end{figure}
The algorithm given in Figure~\ref{termination-code} avoids global synchronization and incurs no overhead for successful deletions, relying on the following assumption: The PQ is guaranteed to be empty if the last operation of every thread was a failed deletion and currently no insertions are in progress.
To the best of our knowledge, most relaxed PQ designs can be implemented such that this assumption holds.
Note that linearizable PQs provide an even stronger guarantee by definition: The PQ is empty if for any thread a \Op{delete} operation failed that started after the last \Op{insert} operation completed.
However, the \Op{delete} operation of the MultiQueue as given in Figure~\ref{multiqueue-op-code} does not satisfy this assumption.
We therefore extend it to retry the operation a fixed number of times upon failure, after which all internal PQs are scanned linearly for elements (locked PQs are skipped).

The algorithm detects termination by tracking the number of threads whose last operation was a failed deletion.
To prevent premature termination, the algorithm counts the threads twice, similar to the well-known \emph{double counting} mechanism for distributed systems \cite{matternAlgorithmsDistributedTermination1987}.
Threads can be in one of three states: \emph{working}, \emph{polling}, or \emph{idling}.
The number of threads in the polling and idling states are tracked using dedicated counters.
All threads begin in the working state, where they repeatedly delete elements from the PQ (Line~\ref{working}).
If a thread fails to delete an element, it transitions to the polling state (Line~\ref{enterpolling}).
While polling, the thread keeps attempting deletions until either it succeeds, or all other threads are also polling (Lines~\ref{trypop}--\ref{skipidle}).
If a deletion was successful, the thread returns to the working state.
If a polling thread detects that all threads are polling, it transitions to the idling state (Line~\ref{enteridle}).
Note that terminating at this point would be premature, since another polling thread may have just successfully deleted an element but has not yet entered to the working state.
In the idling state, a thread waits until all other threads also become idle, at which point it terminates (Line~\ref{terminate}).
However, if another thread transitions back to the working state, the idling thread re-enters the polling state.

In the following, we argue that the termination detection algorithm satisfies the properties (1) and (2).
A thread terminates if and only if it observes the \Op{idle} counter reaching $p$ in Line~\ref{terminate}.
Since a thread must decrement a counter before incrementing it again, for the \Op{idle} counter to reach $p$, all threads must be in the \Op{idle} state at the same time.
At that time, every thread must have performed a failed deletion without subsequently starting an insertion.
Given the assumption that the PQ is empty under these conditions, and no thread has an element to process, all threads can safely terminate.
Moreover, each thread attempts to delete the next element to process within a bounded number of steps (assuming uniform progress and instantaneous memory propagation) until it terminates.
If no elements remain to be processed, all threads will eventually leave the working state and never re-enter it.
Thus, the \Op{polling} counter will eventually reach $p$, and all threads will eventually transition to the \Op{idle} state and never leave it again.
Once all threads reach the \Op{idle} state, the \Op{idle} counter reaches $p$ and all threads terminate.
Thus, the algorithm satisfies the properties (1) and (2).

The algorithm is not lock-free, as all threads must increment the \Op{polling} and \Op{idle} counters before any thread can terminate.
If a thread halts before completing both increments, it delays the termination of all other threads.
However, the algorithm does allow threads to leave and join the system at specific points.
Note that arbitrarily suspending threads would break most applications, regardless of the termination detection algorithm.
For example, a thread could be suspended right after deleting an element but before processing it.
To avoid this issue, deleting and processing an element (including possible insertions) would need to occur within a single atomic transaction, which is beyond the scope of this work.

The termination detection algorithm can be adapted for PQ implementations where deletions may fail arbitrarily.
Then, each thread additionally counts its insertions and deletions locally.
Instead of terminating when all threads are idle, they synchronize to sum up all counters.
If the total number of insertions is equal to the total number of deletions, all threads can terminate.
Otherwise, the threads are reset to the working state and continue processing elements.

\section{Experiments}\label{experiments}
The experiments can be divided into three parts.
In the first part, we measure the rank errors and delays exhibited by the basic MultiQueue and compare them to the results of the theoretical analysis.
In the second part, we perform parameter tuning of the MultiQueue, including the practical improvements, to obtain a set of configurations that offer a good trade-off between throughput and quality.
Finally, in the last part, we conduct an extensive comparison of the MultiQueue with its competitors, evaluating performance across various workloads and different hardware architectures.
This includes stress tests and concurrent variants of Dijkstra's algorithm and a branch-and-bound algorithm.
The latter might be of independent interest as parallel algorithms in their own right.

\subsection{Implementation}\label{experiments:implementation}
At its core, the MultiQueue is an array with one entry for each internal PQ.
Each entry contains a mutual exclusion lock, the insertion and deletion buffers, a redundant copy of the smallest element, and a pointer to the PQ itself.
Thus, the array requires a fixed amount of memory in $\BigO(p)$.
Comparing the smallest elements of internal PQs is done by comparing the redundant copies, which does not require locking the PQs.
Entries and the buffers themselves are padded and aligned to \emph{cache lines} to prevent false sharing of neighboring entries.
Thread-specific data, such as stickiness counters, are stored in thread-local storage.

The insertion buffer consists of an array and a size counter.
The deletion buffer is implemented as a ring buffer with elements sorted in ascending order.
Ring buffers support removing the first element in constant time, while insertions at arbitrary positions take at most $n/2$ element moves with $n$ elements.
The PQs themselves are implicit $k$-ary heaps stored in dynamically growing arrays.

Our implementation of the MultiQueue data structure allows the user to easily exchange the PQ implementation and the
memory allocation strategy.
The priority queue implementation we use for our experiments is built on top of \texttt{std::vector} using the default \texttt{std::allocator} from the C\texttt{++} standard library.
Memory is only returned to the system when the PQ is destructed.
Thus, the overall memory consumption is in $\BigO(p+n)$, where $n$ is the maximum previous size of the queue.

While our implementation can handle generic element types, it is designed and optimized for elements of small size that are cheap to copy and compare.

\subsection{Methodology}\label{experiments:methodology}
\paragraph{Stress Tests}
The stress  tests are designed to measure the maximum throughput and quality under high contention.
For compatibility with all tested priority queue implementations, elements are key-value pairs of two unsigned 64-bit integers.
The order of the elements is determined by the key.
The value of each element is a globally unique identifier to ensure that no two elements are equal.
The following stress tests are conducted:
\begin{description}
	\item[Monotonic:]
	      At first, $n$ elements with keys $\{1,\ldots,n\}$ are inserted into the PQ to simulate a busy system (the \emph{pre-fill}).
	      Then, each thread repeatedly performs alternating \Op{delete} and \Op{insert} operations until a fixed number of total iterations is reached.
	      The keys of inserted elements are drawn uniformly at random from the range $[k,k+n]$, where $k$ is the key of the previously deleted element.
	\item[Insert-Delete:]
	      For some number of elements $n$, all threads insert elements with keys drawn uniformly at random from the range $[1,n]$ until $n$ elements are inserted.
	      Then, all threads perform \Op{delete} operations until the PQ is empty.
\end{description}
The monotonic stress test keeps the total number of elements in the PQ constant and roughly mimics applications where the elements monotonically become larger, such as Dijkstra's algorithm and branch-and-bound algorithms.
The rank of newly inserted elements is more evenly distributed compared to the naive approach to draw keys uniformly at random from the range $[1,n]$, where the ranks of newly inserted elements are very low most of the time.
We consider the monotonic stress test as reasonably general to assess the general throughput, rank errors and delays of concurrent PQ implementations.
The insert-delete stress test complements the monotonic test as it measures the maximum throughput of insertions and deletions separately with varying numbers of elements.
The insert-delete workload is reminiscent of heap-sort.

\paragraph{Measuring Rank Errors and Delays}
Measuring the rank errors and delays in real-time imposes the practical problem that we need to know which elements are present in the PQ at the time of each \Op{delete} operation.
We approach this problem as follows.
Each thread logs its operations locally with timestamps using a low-overhead high-resolution clock.\footnote{The resolution of the \texttt{std::chrono::high\_resolution\_clock} from the C\texttt{++} standard library was sufficient on our platforms.
	The timestamp counter (TSC) might have higher resolution but is not guaranteed to be synchronized across cores and/or sockets.}
Insertion-timestamps are recorded immediately \emph{before} each \Op{insert} operation and deletion-timestamps are recorded immediately \emph{after} each \Op{delete} operation.
To compute the rank error and delays, we merge the logs of all threads to one global sequence of operations sorted by timestamp.
We then replay this sequence sequentially and compute the exhibited rank errors and delays for each \Op{delete} operation.
While this approach is not perfectly accurate, we deem it sufficient for our purposes.
Notably, even linearizable PQs can exhibit rank errors due to the time window between the timestamps and the actual linearization points.
We use an augmented B\texttt{+} tree (based on the \verb!tlx::btree_map! from the \texttt{tlx} library~\cite{TLX}) to replay the operations and compute the rank errors and delays.
The B\texttt{+} tree holds the elements in priority order and contains exactly the elements that were in the priority queue at the point in time of the currently replayed operation according to the timestamps.
By augmenting the inner nodes of the B\texttt{+} tree with their size, the rank of deleted elements can be determined in logarithmic time (e.g., \cite[Section~7.5]{sandersSequentialParallelAlgorithms2019}).
Each node $v$ is further augmented with a delay counter $d_v$.
The delay of an element $e$ (leafs in the B\texttt{+} tree) is the sum of the delay counters on the path from the root to $e$.
Since the delay of newly inserted elements is zero, $d_e$ is initially set to the negated sum of all delays on the path from $e$ to the root.
To increase the delay of smaller elements when deleting an element, the delay counters of nodes as close to the root as possible are incremented.
When performing balancing operations on the tree, the delay counters of the manipulated nodes are propagated downward to unchanged subtrees.

\paragraph{Setup}
The MultiQueue and all benchmarks are implemented in C{}\verb!++!17.
Competitors are implemented in C{} and C{}\verb!++!.
All code is compiled with GCC~12.1 using \verb!-O3 -DNDEBUG -march=native!.
We use three different machines for the experimental evaluation:
\begin{description}
	\item[Machine \textsf{AMD}] is equipped with an AMD~EPYC~7702 processor (\texttt{x86\nobreakdash-64} ISA) and \qty{1}{\tera\byte} of DDR4 RAM.
	      The processor hosts 64 cores with 2 hardware threads each.
	      The maximum clock frequency is \qty{3.35}{\giga\hertz}.
	      Each core has \qty{32}{\kilo\byte} of L1 data cache and \qty{1}{\mega\byte} of L2 cache.
	      Four cores share \qty{16}{\mega\byte} of L3 cache.
	      The machine runs Ubuntu~Server~22.04 with Linux~5.4.
	\item[Machine \textsf{Intel-NUMA}] is equipped with four Intel~Xeon~Gold~6138 processors with \qty{188}{\giga\byte} of DDR4 RAM each.
	      Each processor has 20 cores with 2 hardware threads each.
	      The maximum clock frequency is \qty{3.7}{\giga\hertz}.
	      Each core has \qty{32}{\kilo\byte} of L1 data cache and \qty{1}{\mega\byte} of L2 cache.
	      Each NUMA node shares \qty{27.5}{\mega\byte} of L3 cache.
	      The machine runs Ubuntu~Server~22.04 with Linux~5.4.
	\item[Machine \textsf{ARM}] is equipped with a Neoverse\nobreakdash-N1 processor (\texttt{AArch64} ISA) and \qty{256}{\giga\byte} of DDR4 RAM.
	      The processor hosts 80 cores.
	      The maximum clock frequency is \qty{3}{\giga\hertz}.
	      Each core has \qty{64}{\kilo\byte} of L1 data cache and \qty{1}{\mega\byte} of L2 cache.
	      The machine runs Ubuntu~Server~22.04 with the Linux~5.15 kernel.
\end{description}
We conduct the parameter tuning and all comparison experiments on machine \textsf{AMD}, as it is compatible with all implementations and offers the most homogeneous scalability behaviour.
Additionally, the stress tests are performed on the \textsf{Intel-NUMA} and \textsf{ARM} machines.
The throughput of the monotonic stress test is measured over $2^{24}$ iterations per thread, or until a timeout of $\qty{5}{s}$ is reached.
Whenever possible, sufficient memory is pre-allocated to prevent memory allocation during the benchmarks.
Each stress test is repeated \num{5} times and the mean is reported.
Execution threads are pinned to distinct hardware threads, the clock frequency is fixed and frequency boosting is disabled.
We use the \texttt{numactl} tool on the \textsf{Intel-NUMA} machine to distribute the data evenly across the participating NUMA nodes.
Cache misses are measured on the \textsf{AMD} machine programmatically with the \emph{L2 Cache Miss from Data Cache Miss} performance counter\footnote{\url{https://www.amd.com/content/dam/amd/en/documents/processor-tech-docs/programmer-references/55803-ppr-family-17h-model-31h-b0-processors.pdf}} using the \texttt{PAPI}\footnote{\url{https://icl.utk.edu/papi/}} library.

\subsection{Quality of the basic MultiQueue}\label{experiments:quality}
In this section we want to measure the quality characteristics of the MultiQueue design itself and compare it to the results from the theoretical analysis.
In the following, we refer to the analysis given in Section~\ref{analysis:quality} as the estimation and the results from \citeauthor{walzerSimpleExactAnalysis2024} \cite{walzerSimpleExactAnalysis2024} as the prediction.
Both analyses assume that no locking attempt fails and comparisons are never stale.
To avoid such issues, we execute the tests sequentially.
Therefore, we parametrize the MultiQueue by the total number of PQs rather than of the queue factor in this section.
Preliminary experiments with up to \num{128} threads showed that the effects of concurrency have a negligible impact on the quality.

\begin{figure}
	\includegraphics{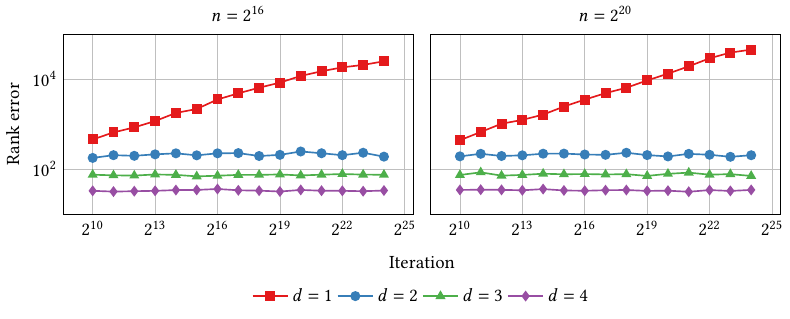}
	\caption{
		Development of the rank error during the monotonic stress test for \num{256} PQs with different pre-fills $n$.
		Each line shows a different number of candidate PQs $d$, and each data point shows the mean of the previous \num{1024} iterations.
	}
	\label{fig:pop-pqs}
\end{figure}
Figure~\ref{fig:pop-pqs} shows the measured quality throughout the monotonic stress test with different numbers of candidate PQs $d$ for the \Op{delete} operation.
When deleting from a randomly chosen PQ ($d=1$), the rank error increases not only with the number of iterations but also with the number of elements in the PQ.
In contrast, considering just two PQs ($d=2$) stabilizes the rank error and makes it independent of the pre-fill.
This result validates the fundamental design of the MultiQueue.
As expected, the rank error further decreases for higher $d$.
We set $d=2$ for the rest of the experiments as it strikes a good balance between quality and cache efficiency.

\begin{figure}
	\includegraphics{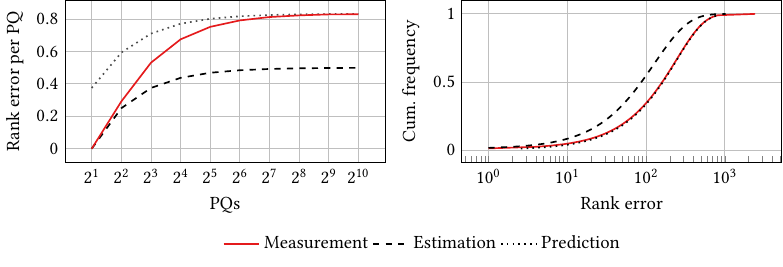}
	\caption{%
	Rank errors during the monotonic stress test with a pre-fill of $n=2^{20}$.
	The left plot shows the mean rank error for different numbers of PQs.
	The right plot shows the cumulative frequency of rank errors less than or equal to the x-axis value with \num{256} PQs.
	For reference, our estimation from Section~\ref{analysis:quality} and the results from \citeauthor{walzerSimpleExactAnalysis2024} \cite{walzerSimpleExactAnalysis2024} are also shown.
	}
	\label{fig:quality_theory}
\end{figure}
Figure~\ref{fig:quality_theory} shows the mean rank error for a varying number of PQs, and the distribution of rank errors for \num{256} PQs.
The mean rank error is indeed linear in the number of PQs and the factor converges almost exactly to the predicted value of $\tfrac{5}{6}$, which is slightly worse than our estimation of $\tfrac{1}{2}$.
The reason for the discrepancy for smaller numbers of PQs is likely due to an optimization in the MultiQueue that ensures that it picks two distinct PQs for deletion, whereas the prediction assumes that both PQs are chosen independently.
For larger numbers of PQs, the optimization becomes less significant.
The shape of the rank error distribution also closely matches the prediction, which is slightly worse than our estimation.

\subsection{Parameter tuning}
In this section, we explore the parameter space of the MultiQueue systematically and identify a set of good configurations.
We generally optimize for good scalability and performance with many threads.
We first tune the buffers and the $k$-ary heap, and then find combinations of stickiness period and queue factors that yield good throughput--quality trade-offs.
We evaluate buffer sizes of $C\in\{0,4,16,64,256,1024\}$ and heap arities of $k\in\{2,4,8,16\}$.
To reduce complexity, we only consider the same buffer size for both the insertion and deletion buffer.
\begin{figure}
	\includegraphics{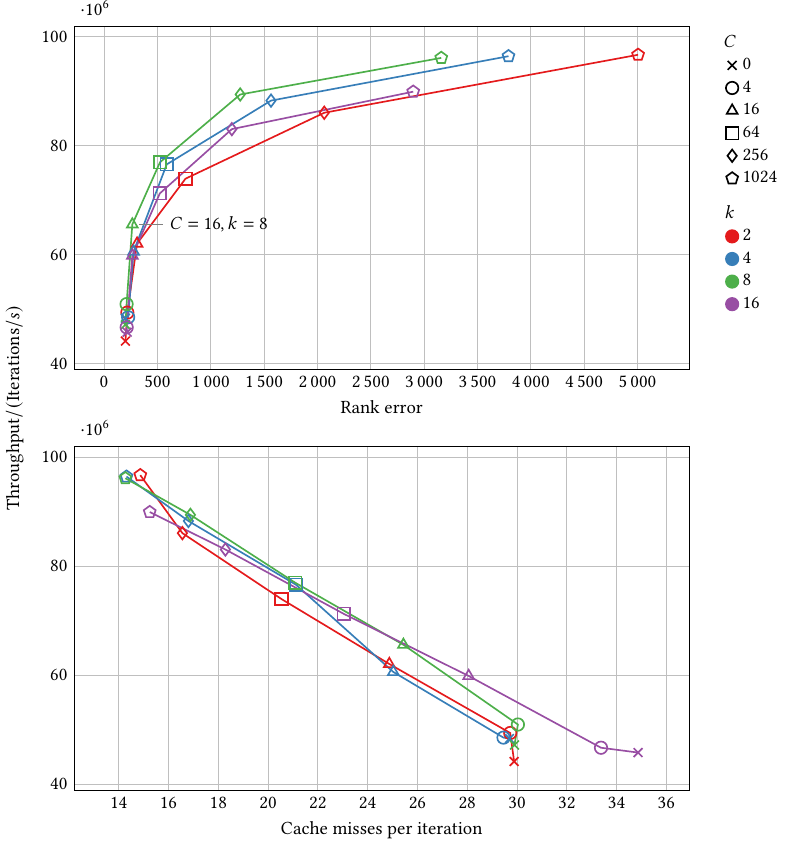}
	\caption{
	Throughput versus rank error and L2 cache misses for varying buffer sizes $C$ and heap arities $k$ with \num{128} threads and $n=2^{23}$ elements pre-fill.
	The buffer size is the same for the insertion buffer and the deletion buffer.
	The annotated data point corresponds to the final configuration with $C=16$ and $k=8$.
	}
	\label{fig:parameter-tuning}
\end{figure}
Figure~\ref{fig:parameter-tuning} shows a clear correlation between buffer size, throughput, and cache misses.
The correlation is in line with our expectations that the performance of the MultiQueue is limited by cache misses, and confirms that buffering is an effective mitigation measure.
While larger buffers increase the overall throughput due to higher cache locality, they also increase the runtime of operations that need to flush or refill buffers.
Higher variance in the runtime of operations leads to deteriorating quality, which can be observed with buffer sizes larger than $C=16$.
Buffer sizes larger than $C=256$ yield only minor improvements in throughput while worsening the quality significantly.
The heap arity only has a minor impact on the throughput and quality, but $k=8$ seems to be the best choice overall.
Our final configuration with $C=16$ and $k=8$ yields the highest throughput without sacrificing quality.

The queue factor $c$ and the stickiness period $s$ offer a trade-off between throughput and quality.
Intuitively, the higher the number of PQs and the longer the stickiness period, the higher the throughput and the rank errors.
\begin{figure}
	\includegraphics{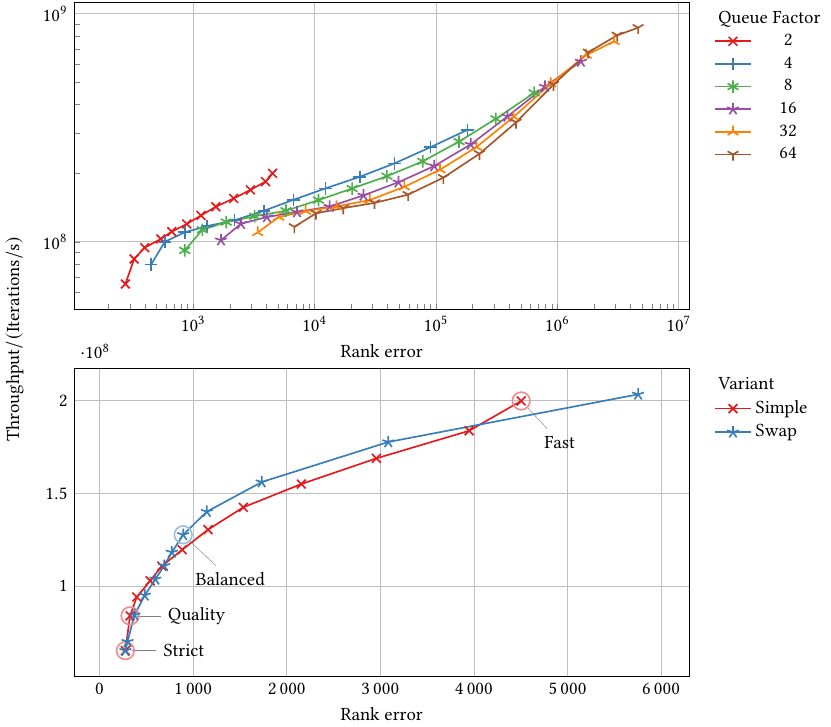}
	\caption{The mean rank error and throughput with stickiness for different queue factors with \num{128} threads and a pre-fill of $n=2^{23}$.
	The points are connected, starting with no stickiness, followed by increasing stickiness periods $2^2,2^3,\ldots,2^{12}$, for better readability.
	The upper plot shows the impact of stickiness (simple variant) for different queue factors.
	The lower plot compares the simple and swap stickiness variants for $c=2$.
	}
	\label{fig:throughput-quality-c-vs-s}
\end{figure}
Figure~\ref{fig:throughput-quality-c-vs-s} confirms this intuition and shows that the quality deteriorates more quickly with increasing queue factors than with higher stickiness.
Consequently, configurations with $c=2$ dominate configurations with higher $c$ and comparable throughput.
The speedup from increasing the stickiness period eventually stagnates, presumably because the cache misses from switching PQs become insignificant.
For higher throughput, the queue factor has to be increased at the cost of significantly worse quality.
With higher queue factors, each PQ contains fewer elements, so the operations are faster and longer stickiness periods are possible before the throughput stagnates.
Notably, for $c=2$ and the simple stickiness variant, not only the throughput, but also the quality stagnates.
A possible explanation is that switching PQs can trigger other threads to switch PQs as well, before their stickiness period is over.
This cascading effect is less pronounced with higher queue factors, since it becomes more likely to switch to an unused PQ.
The swap stickiness variant also mitigates the cascading effect and yields slightly higher throughput than the simple variant with comparable quality.

\begin{table}
	\caption{A selection of Pareto-optimal configurations.}
	\label{tab:config}
	\begin{tabular}{l l S[table-format=2.0] S[table-format=4.0] S[table-format=3.0] S[table-format=7.0]}
		Name     & {Stickiness} & {$c$} & {$s$}     & Throughput                               & {Rank error} \\
		         &              &       &           & {$10^6\text{Iterations}/\unit{\second}$} &              \\
		\hline
		Strict   & No           & 2     & \text{--} & 65                                       & 276          \\
		Quality  & Simple       & 2     & 4         & 84                                       & 326          \\
		Balanced & Swap         & 2     & 256       & 128                                      & 894          \\
		Fast     & Simple       & 2     & 4096      & 200                                      & 4505         \\
		\hline
		         & Simple       & 4     & 4096      & 310                                      & 181811       \\
		         & Simple       & 16    & 4096      & 619                                      & 1563132      \\
		         & Simple       & 64    & 4096      & 870                                      & 4623838      \\
	\end{tabular}
\end{table}
Table~\ref{tab:config} shows configurations that yield interesting trade-offs between average rank error and throughput.
For comparison with the competitors, we only consider configurations with $c=2$, as other Pareto-optimal configurations with $c>2$ exhibit significantly worse quality.
The sharp increase in rank error for $c>2$ is likely due to the longer time required for all PQs to be selected for deletion by some thread again, reminiscent of the coupon collector's problem.
Workloads in which not all threads frequently perform deletions (such as producer-consumer scenarios) may experience higher rank errors than in the monotonic stress test for the same configurations.

\subsection{Comparison}
We now compare the final configurations of the MultiQueue to state-of-the-art concurrent priority queues found in the literature.
Besides relaxed PQs, we include linearizable PQs in order to assess the trade-off between quality and throughput in real-world applications.
In particular, we include the linearizable PQ by \citeauthor{lindenSkiplistBasedConcurrentPriority2013}~\cite{lindenSkiplistBasedConcurrentPriority2013}, as it is a popular point of reference in the literature (e.g., \cite{braginskyCBPQHighPerformance2016,rukundoTSLQueueEfficientLockFree2021,sagonasContentionAvoidingConcurrent2017,wimmerLockfreeKLSMRelaxed2015,alistarhSprayListScalableRelaxed2015})
While there is more recent work on linearizable PQs (e.g., \cite{braginskyCBPQHighPerformance2016,rukundoTSLQueueEfficientLockFree2021}) that outperforms the Linden PQ empirically by up to a factor of $5$, we expect all linearizable PQs to have significantly worse scalability and throughput (with high thread counts) than relaxed PQs.
As a baseline for the stress tests, we also include the buffered $k$-ary heap used by the MultiQueue, guarding each access with a mutual exclusion lock.
The following list gives an overview of the competitors.
Detailed descriptions of most competitors can be found in Section~\ref{related-work}.
\begin{description}
	\item [MQ strict, quality, balanced, fast]
	      MultiQueue configurations from Table~\ref{tab:config}.
	\item [$k$-LSM 4, 256, 1024, 4096] The $k$-LSM by \citeauthor{wimmerLockfreeKLSMRelaxed2015}~\cite{wimmerLockfreeKLSMRelaxed2015} with $k=4$, $k=256$, $k=1024$ and $k=4096$, respectively.
	\item [CA-PQ] The contention-avoiding priority queue by \citeauthor{sagonasContentionAvoidingConcurrent2017}~\cite{sagonasContentionAvoidingConcurrent2017}.
	\item [Spraylist] The Spraylist by \citeauthor{alistarhSprayListScalableRelaxed2015}~\cite{alistarhSprayListScalableRelaxed2015} in the default configuration.
	\item [SMQ]
	      Standalone adaption of the stealing MultiQueue implementation\footnote{\url{https://github.com/npostnikova/mq-based-schedulers}} by \citeauthor{postnikovaMultiqueuesCanBe2022}~\cite{postnikovaMultiqueuesCanBe2022}.
	      We use a stealing probability of $\beta=\tfrac{1}{8}$ and a batch size of $b=64$ as suggested by the authors.
	\item [Linden] Linearizable PQ by \citeauthor{lindenSkiplistBasedConcurrentPriority2013}~\cite{lindenSkiplistBasedConcurrentPriority2013}.
	\item [TBB PQ]
	      Linearizable Concurrent priority queue from Intel® oneAPI Threading Building Blocks.\footnote{The concurrent priority queue from Intel® oneAPI Threading Building Blocks.}
	\item [Locked Heap]
	      Buffered $k$-ary heap used by the MultiQueue with a mutual exclusion lock for each operation.
	      This competitor is included to provide a linearizable baseline for the stress tests.
\end{description}
The implementations of the CA-PQ, the Spraylist and the Linden PQ are taken from the $k$-LSM repository\footnote{\url{https://github.com/klsmpq/klsm}}.
Unfortunately, the implementations of the $k$-LSM, the CA-PQ, the Spraylist and the Linden PQ are not compatible with the ARM \texttt{AAarch64} ISA; they either fail to compile or behave incorrectly due to the more relaxed memory ordering compared to \texttt{x64\nobreakdash-64}.
The \Op{delete} operations of the $k$-LSM and Spraylist implementations sometimes fail unexpectedly.
Therefore, we use the adapted termination detection algorithm described in Section~\ref{termination} for these competitors when necessary.\footnote{We do not see fundamental reasons why these PQs could not be implemented in a way that satisfies the assumption.}

\subsubsection{Stress tests}
\begin{figure}
	\includegraphics{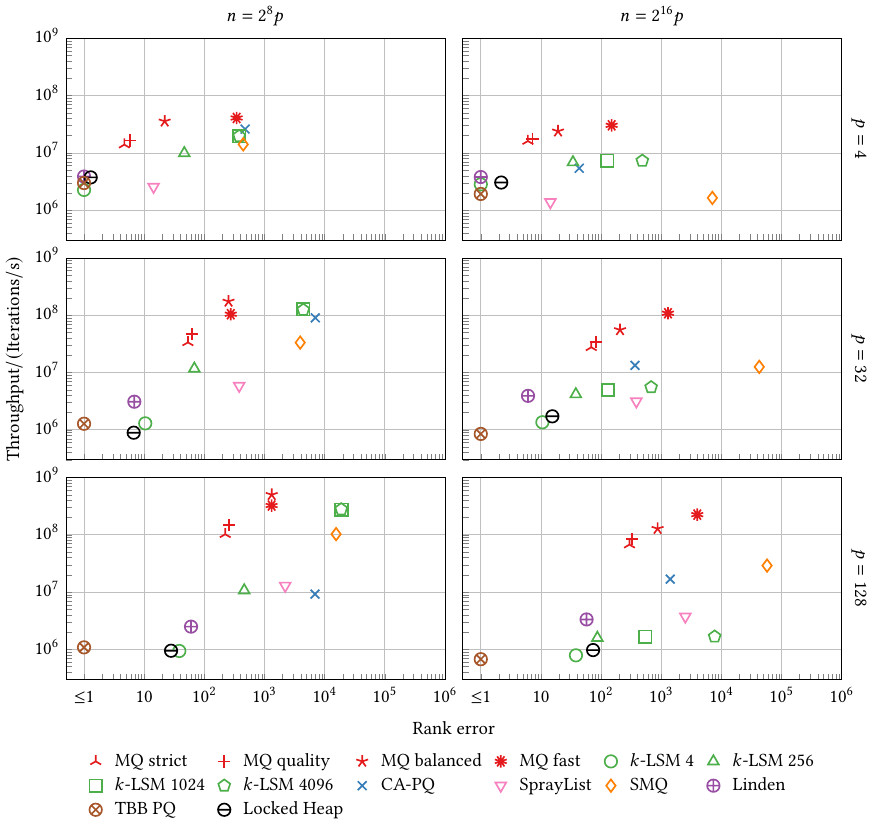}
	\caption{%
		Mean rank error versus throughput for the monotonic stress test with different pre-fills $n$ on machine \textsf{AMD}.
		Each row of plots corresponds to a different thread count ($p$) and each column to a different pre-fill ($n$).
		The label ``${\leq}1$'' indicates that the rank error is less than or equal to $1$.
	}
	\label{fig:comparison-throughput-quality}
\end{figure}
Figure~\ref{fig:comparison-throughput-quality} plots the quality against the throughput for all competitors on the monotonic stress test with two different pre-fills on machine \textsf{AMD}.
The full results with a larger spectrum of thread counts and pre-fills are given in Appendix~\ref{appendix:throughput_quality}.
The MultiQueue configurations dominate the Pareto-front except for very small rank errors with all tested pre-fills.
When comparing with competitors that achieve similar rank errors on \num{128} threads, the advantages range from $4.5\times$ to $25\times$ in throughput.
With large pre-fills, the MultiQueue achieves significantly higher throughput than all other competitors.
The $k$-LSM~1024 and the $k$-LSM~4096 achieve similar throughput as the fast MultiQueue with \num{128} threads for the smallest pre-fill but exhibit an order of magnitude higher rank errors.
On large pre-fills, the $k$-LSM fails to scale and is slower than the MultiQueue by almost two orders of magnitude with \num{128} threads.
The CA-PQ achieves competitive throughput with the MultiQueue but fails to scale beyond \num{32} threads on small pre-fills.
On large pre-fills, the CA-PQ exhibits similar quality to the balanced MultiQueue but is \num{5} times slower.
The SMQ offers competitive throughput to the MultiQueue on small pre-fills, but the quality is at least an order of magnitude worse in most cases.
Moreover, the quality of the SMQ deteriorates with the pre-fill, and it exhibits the highest rank errors of all competitors on large pre-fills.
The Spraylist is robust to the pre-fill and exhibits rank errors very similar to the balanced MultiQueue but is an order of magnitude slower.
Unsurprisingly, the linearizable PQs and the $k$-LSM~4 yield very low rank errors but do not offer competitive throughput, especially with higher thread counts.

\begin{figure}
	\includegraphics{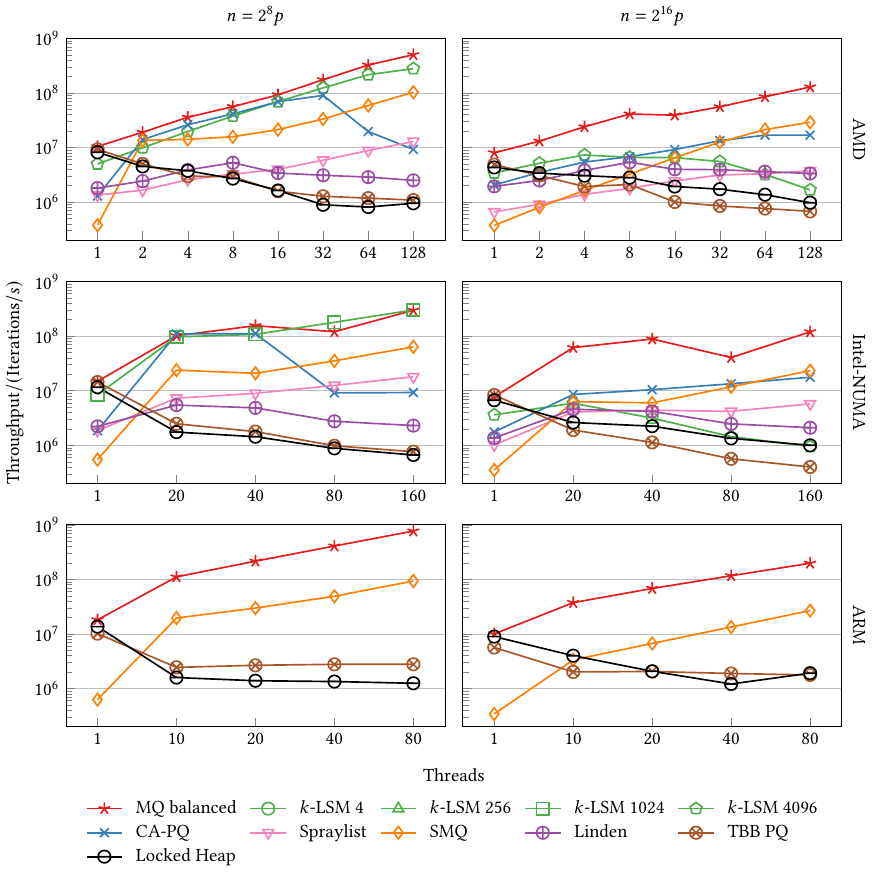}
	\caption{
		Weak scaling experiment showing the throughput for the monotonic stress test versus the number of threads on different machines.
		Each row corresponds to a different machine and each column to a different pre-fill.
		Only the $k$-LSM variant with the highest throughput is shown for better readability.
		Missing plots on the \textsf{ARM} machine are due to incompatibility with the \texttt{AArch64} ISA.
	}
	\label{fig:comparison-throughput-scaling}
\end{figure}
Figure~\ref{fig:comparison-throughput-scaling} shows the corresponding throughput scaling behaviour of all competitors on all three machines using the same pre-fills as in Figure~\ref{fig:comparison-throughput-quality}.
For better readability, only the balanced MultiQueue configuration and the best (highest maximum throughput) $k$-LSM variant are shown.
The full results are given in Appendix~\ref{appendix:scaling}.
The MultiQueue variants generally scale well, even when using hardware multithreading.
The performance dips slightly when going upwards in the architectural hierarchy.
This effect is most pronounced on the Intel-NUMA machine when going from two to four sockets.
However, this is a price one is often willing to pay in order to be able to globally coordinate multiple threads by priority driven scheduling of work---the overall application may still profit from being able to use more threads.

As a close relative to the MultiQueue, the SMQ has similar scaling behavior, albeit with lower absolute performance.
The CA-PQ scales up to a moderate number of threads but then experiences sharp drops in performance, likely when it has to fallback to the global skip list.
We view it as likely that adapting some tuning parameters might be able to remedy this problem, albeit at the price of even higher rank errors.
The more relaxed variants of the $k$-LSM scale well for small queues, but the scalability is poor in all other configurations.
We attribute the particularly bad performance for large queues to the bottleneck introduced by the expensive merging within LSM trees.
As expected, the linearizable queues do not scale at all.
Throughput even \emph{decreases} by an order of magnitude when going to the maximum number of threads, and can be up to \num{300} times slower than the fastest relaxed queues on the \textsf{AMD} machine.
This underlines the utility of relaxed queues.

\begin{figure}
	\includegraphics{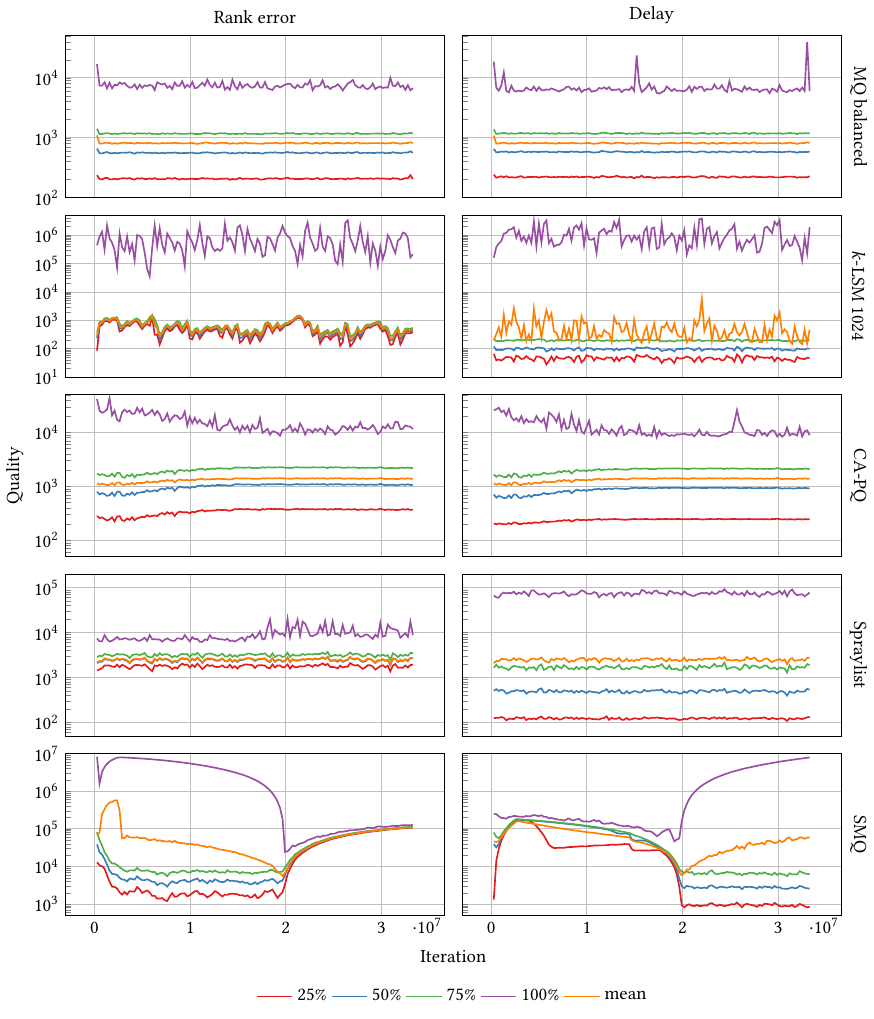}
	\caption{
	Development of the rank errors and delays of the monotonic stress test with a pre-fill of $n=2^{16}p$ and $128$ threads on machine \textsf{AMD}.
	The iterations are grouped into bins of size $2^{18}$, and for each bin the $25\%$, $50\%$, $75\%$ and $100\%$ quantiles as well as the mean are plotted.
	Each row corresponds to one competitor and the two columns show the rank errors and delays, respectively.
	}
	\label{fig:comparison-quality-binned}
\end{figure}
Examining the rank errors and delays in more detail, Figure~\ref{fig:comparison-quality-binned} shows their development over time for a selection of the relaxed PQs.
The MultiQueue exhibits very stable rank errors and delays over time, and both metrics behave very similarly.
The $25\%$ and $75\%$ quantiles of the rank errors and delays are close together, indicating that most rank errors and delays are within a small range.
The largest measured rank errors and delays are roughly one order of magnitude larger than the mean.
The rank errors and delays fluctuate much more for the $k$-LSM than for the other queues.
These fluctuations reach an order of magnitude for the maxima and also exceed the claimed rank error bounds by an order of magnitude.
This is likely due to occasional expensive sequential merge operations of large parts of the log-structured merge tree during which no elements can be deleted from the merged lists.
While the rank error quantiles are very close together, the delay quantiles are more spread out.
The CA-PQ behaves similarly to the MultiQueue but exhibits a slight increase in mean rank errors and delays over time, while the maximum rank errors and delays decrease.
The Spraylist exhibits a very small range of rank errors but delays are widely spread.
The SMQ shows very inconsistent behavior over time, which is unexpected for such a homogeneous benchmark, and given the consistent behavior of the other competitors.
Due to these inconsistencies, the overall mean rank errors and delays vary significantly with the number of iterations for the SMQ.

\begin{figure}
	\includegraphics{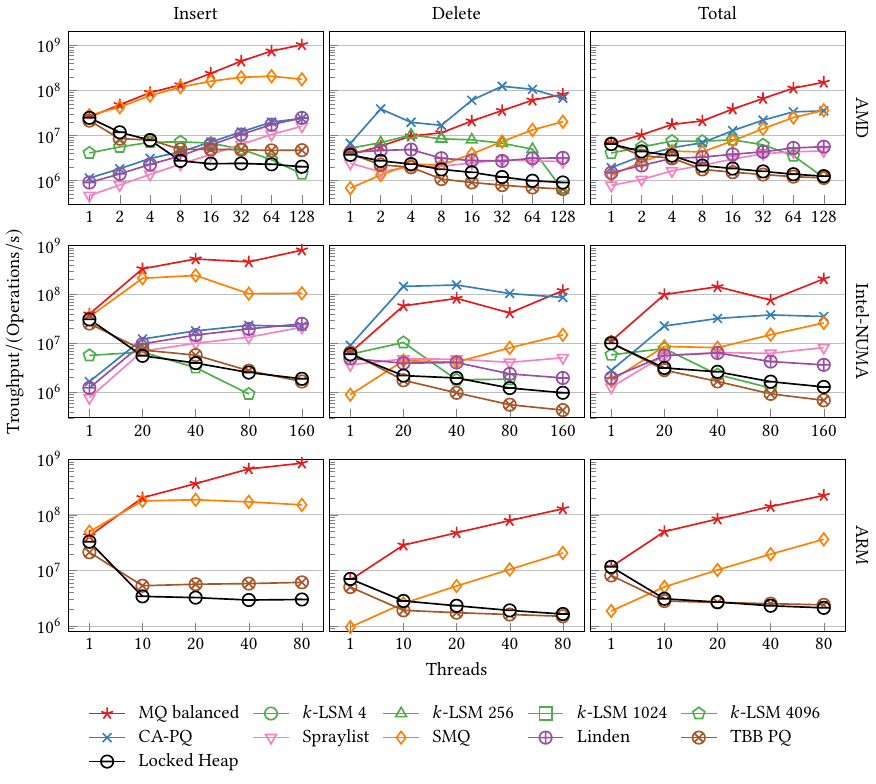}
	\caption{
		Throughput comparison for the insert-delete stress test with $2^{20}$ elements per thread on different machines.
		Only the $k$-LSM variant with the highest throughput is shown for better readability.
		Data points for the $k$-LSM variants on machine \textsf{Intel-NUMA} for \num{160} threads are missing due to crashes.
		Missing lines in the \textsf{ARM} plot are due to incompatibility with the ARM \texttt{AArch64} ISA.
	}
	\label{fig:comparison-push-pop}
\end{figure}
Figure~\ref{fig:comparison-push-pop} shows the throughput as the average time per operation for the insert-delete stress test on all machines.
Again, we only show the balanced MultiQueue configuration and the best $k$-LSM variant for better readability.
The full results are given in Appendix~\ref{appendix:insdel}.
Most competitors, with the exception of the CA-PQ, finish the insertions faster than the deletions.
Only the SMQ can compete with the insertion times of the MultiQueue; however, it scales worse with the number of threads.
All other competitors are at least an order of magnitude slower than the MultiQueue and SMQ for high thread counts.
The CA-PQ generally has the fastest deletions for low thread counts but is slower than the MultiQueue for high thread counts.

\subsubsection{Single-Source Shortest-Path}\label{experiments:SSSP}
The single-source shortest-path (SSSP) problem is a fundamental and widely known graph problems.
Given a weighted graph $G$ and a source node $s$ in $G$, the goal is to find the shortest path from $s$ to all other nodes in $G$.
The most famous algorithm for the SSSP problem with non-negative edge weights is probably Dijkstra's algorithm \cite{dijkstraNoteTwoProblems2022}.
Hence, a natural parallel algorithm is a modified version of Dijkstra's algorithm using a concurrent PQ.
\footnote{
	Since we use the algorithm as a benchmark for PQs, we do not prioritize the competitiveness of the algorithm itself.
	Several specialized parallel algorithms for the SSSP exist (e.g., $\Delta$-stepping \cite{meyerDsteppingParallelizableShortest2003}).
}
The algorithm maintains an array of tentative distances to each node and a PQ that stores nodes ordered by their tentative distances.
Initially, the tentative distance is $0$ for $s$ and $\infty$ for all other nodes, with $s$ being the only node in the PQ.
Then, all threads repeatedly process nodes from the PQ.
If a node's tentative distance remains unchanged since its insertion, the node is \emph{scanned}; otherwise, it is discarded.
Scanning involves checking each outgoing edge to determine whether it yields a shorter tentative distance to a neighboring node.
If so, the neighbor's tentative distance is updated atomically in the array, and the neighbor is (re)inserted into the PQ.
Note that nodes are reinserted into the PQ with improved tentative distances instead of being updated in-place since most implementations of relaxed PQs do not support these updates.
The algorithm terminates when the PQ is empty and no thread is currently scanning a node.
Since the total number of operations is unknown in advance, we employ the termination detection algorithm described in Section~\ref{termination}.

Nodes can be deleted and scanned before all nodes with shorter tentative distances are fully scanned, meaning shorter paths to a node may still be found even after it has been scanned.
This results in redundant scans, making the algorithm \emph{label-correcting}, even with linearizable PQs.
In contrast, classical Dijkstra's algorithm is \emph{label-setting}---each node is only scanned once and its distance does not change afterward.
The number of redundant scans depends on the quality of the PQ, but in the worst case, a node might be scanned an exponential number of times relative to the number of nodes in the graph \cite{shierPropertiesLabelingMethods1981}.

\begin{table}
	\caption{Graphs used for the SSSP problem.}
	\label{tab:graphs}
	\begin{tabular}{l | S[table-format=2.1] S[table-format=3.1] S[table-format=6.0]}
		Name              & {$\text{Nodes}/ 10^6$} & {$\text{Edges}/ 10^6$} & {Max. degree} \\
		\hline
		NY                & 0.3                    & 0.7                    & 8             \\
		CAL               & 1.9                    & 4.7                    & 8             \\
		CTR               & 14.1                   & 34.3                   & 9             \\
		GER               & 20.7                   & 41.8                   & 9             \\
		USA               & 23.9                   & 58.3                   & 9             \\
		\hline
		$\text{RHG}_{20}$ & 1                      & 10.3                   & 90391         \\
		$\text{RHG}_{22}$ & 4.2                    & 41.3                   & 940912        \\
		$\text{RHG}_{24}$ & 16.8                   & 159.7                  & 594914        \\
	\end{tabular}
\end{table}
We evaluate the SSSP problem on road networks and random hyperbolic graphs (RHGs), and report the solving time for different thread counts as well as the number of scanned nodes.
The edge weights in the road networks correspond to an estimated travel time, and the weights in the hyperbolic graphs correspond to the integer approximation of the hyperbolic distance.
All road networks were obtained from the $9$th DIMACS implementation challenge\footnote{\url{http://www.diag.uniroma1.it/challenge9/download.shtml}}, except for the GER graph\footnote{\url{https://i11www.iti.kit.edu/resources/roadgraphs.php}}.
We used KaGen~\cite{funkeCommunicationfreeMassivelyDistributed2019} to generate hyperbolic graphs with $2^{20}$, $2^{22}$ and $2^{24}$ nodes, a configured average degree of $16$ and a power law exponent of $\gamma=2.3$.
An overview of the graphs is given in Table~\ref{tab:graphs}.

\begin{figure}
	\includegraphics{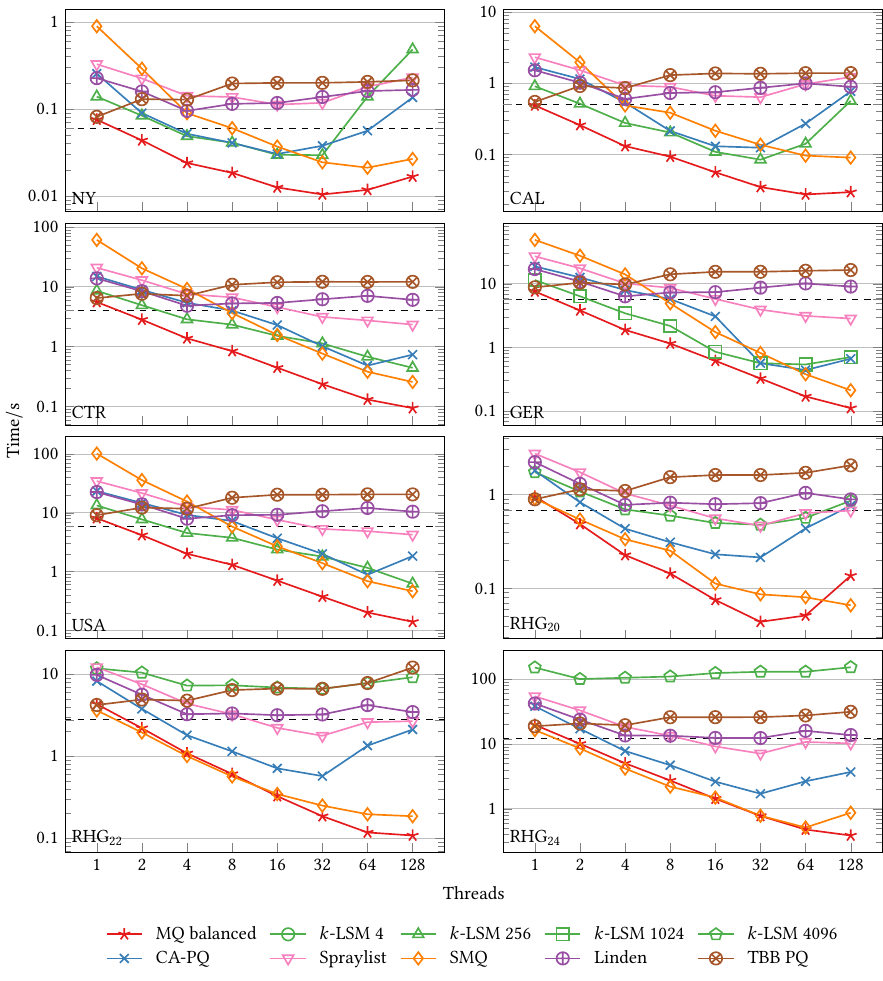}
	\caption{
		Time to solve the SSSP problem with different numbers of threads on machine \textsf{AMD}.
		The black dashed line is the solving time of the sequential algorithm.
		Only the $k$-LSM variant with the fastest solving time is shown for better readability.
	}
	\label{fig:compare-dijkstra-time}
\end{figure}
\begin{figure}
	\includegraphics{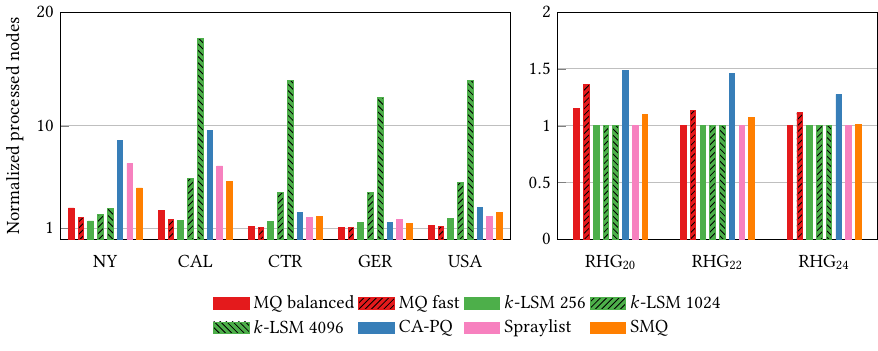}
	\caption{
		Normalized processed nodes with \num{128} threads with respect to the sequential SSSP algorithm.
		The normalization is with respect to the number of processed nodes by the sequential algorithm.
		For clarity, only the best performing priority queues are shown.
		The $k$-LSM~256 and 1024 do not finish within the time limit of \qty{300}{\second} on graph $\text{RHG}_{24}$.
	}
	\label{fig:compare-dijkstra-processed-nodes}
\end{figure}
Figure~\ref{fig:compare-dijkstra-time} shows the time to solve the SSSP problem on these graphs with varying numbers of threads.
Again, we only show the balanced MultiQueue configuration and the fastest $k$-LSM variant for better readability.
The full results are given in Appendix~\ref{appendix:sssp}.
The MultiQueue and the SMQ generally scale well with the number of threads.
The MultiQueue yields faster solving times than all other competitors on all graphs, achieving speedups of \num{50} with \num{128} threads over the sequential algorithm.
These results indicate that it might be interesting to investigate a relaxed version of Dijkstra's algorithm as a competitive parallel SSSP solver.
The SMQ offers similar performance to the MultiQueue on the hyperbolic graphs but is slightly slower on the road networks.
While the $k$-LSM has good speedups on the larger road networks, it does not scale at all on the random hyperbolic graphs.
This is consistent with the observations from our stress tests, where we observed that the $k$-LSM performs best when the queues are small---road networks have high diameter and thus small average search frontiers while the search frontier in RHGs quickly expands to encompass a large part of the graph.
Notably, the $k$-LSM~1024 performs better than the more relaxed variants on the road networks.
On the $\text{RHG}_{20}$ graph, the SMQ is faster than the MultiQueue for $128$ threads, since the MultiQueue fails to scale beyond $32$ threads.
However, the strict and quality variants of the MultiQueue still outperform the SMQ (refer to Appendix~\ref{appendix:sssp}).
On the RHG graphs, the fast MultiQueue does not scale beyond \num{16} threads and becomes slower than the other MultiQueue configurations.

To demonstrate the effects of relaxation on performance, Figure~\ref{fig:compare-dijkstra-processed-nodes} shows the excess work done by the best performing PQs.
Besides the much slower $k$-LSM 256, the MultiQueue variants process the fewest nodes on the road networks, which is consistent with the solving times.
The $k$-LSM~4096 processes the most nodes on road networks (except for the NY graph), up to $18\times$ more than the sequential algorithm and three times as many as the $k$-LSM~1024.
This explains the better performance of the $k$-LSM variants with smaller $k$.
The amount of excess work done on the hyperbolic graphs is generally much lower than on the road networks.
A possible reason is that the much lower diameter of hyperbolic graphs leads to much larger average queue size.
In comparison, the rank errors are very small.
In other words, the shortest path algorithm has plenty of nodes in the queue to choose from, and the nearest of those nodes are likely to have already reached their final shortest path distance.
While the CA-PQ exhibits the most overhead, the increased overhead of the fast MultiQueue compared to the balanced MultiQueue seems to be responsible for the worse performance on the RHG graphs.

In summary, relaxation is beneficial for the concurrent version of Dijkstra's algorithm, but the amount of relaxation has to be chosen carefully to avoid excessive node scans.

\subsubsection{Knapsack}\label{experiments:knapsack}
Lastly, we evaluate the performance of the competitors on a branch-and-bound algorithm for the knapsack problem.
The PQ stores partial solutions, each containing the current value, weight, and index up to which items have been considered.
The lower bound of a partial solution is given by greedily adding items to the knapsack in order of value density until the next item no longer fits or no items remain.
To speed up this step, we use a \emph{finger search}, starting the search for the next item at the index of the last added item.
The upper bound is computed by adding the fractional portion of the next item to completely fill the knapsack.
These bounds are used to prune solutions that cannot be optimal.
The partial solutions in the PQ are ordered in decreasing order of the upper bound.
Similar to the SSSP problem, the algorithm terminates when the PQ is empty, requiring the termination detection mechanism described in Section~\ref{termination}.

We generate instances following a standard approach \cite[Section 2.10]{martelloKnapsackProblemsAlgorithms1990},
adapted to integer weights and values; see also \cite{sandersLastverteilungsalgorithmenFuerParallele1997}.
For each item, the weight $w$ is drawn uniformly at random from $[1,W]$, and the value $v$ is given by $v=w+k$, where $k$ is drawn uniformly at random from $[V, fV]$ for $W\in\{1000,10000\},V=\tfrac{W}{10},f\in\{1.1,1.15,1.2\}$.
The target capacity is $50\%$ or $90\%$ of the total weight of all items.
These parameters are chosen to generate ``interesting'' instances that are nontrivial to solve yet allow for significant pruning opportunities.
For each parameter combination, we generate \num{100} instances with \num{1000} items.
From these, we select instances that the sequential algorithm solves within \num{1}--\qty{15}{\second}, resulting in a final set of \num{147} instances.
Each competitor is given a time limit of \qty{60}{\second}, leading to a total thread-time with \num{128} threads that is $512\times$ that of the sequential algorithm.

\begin{figure}
	\includegraphics{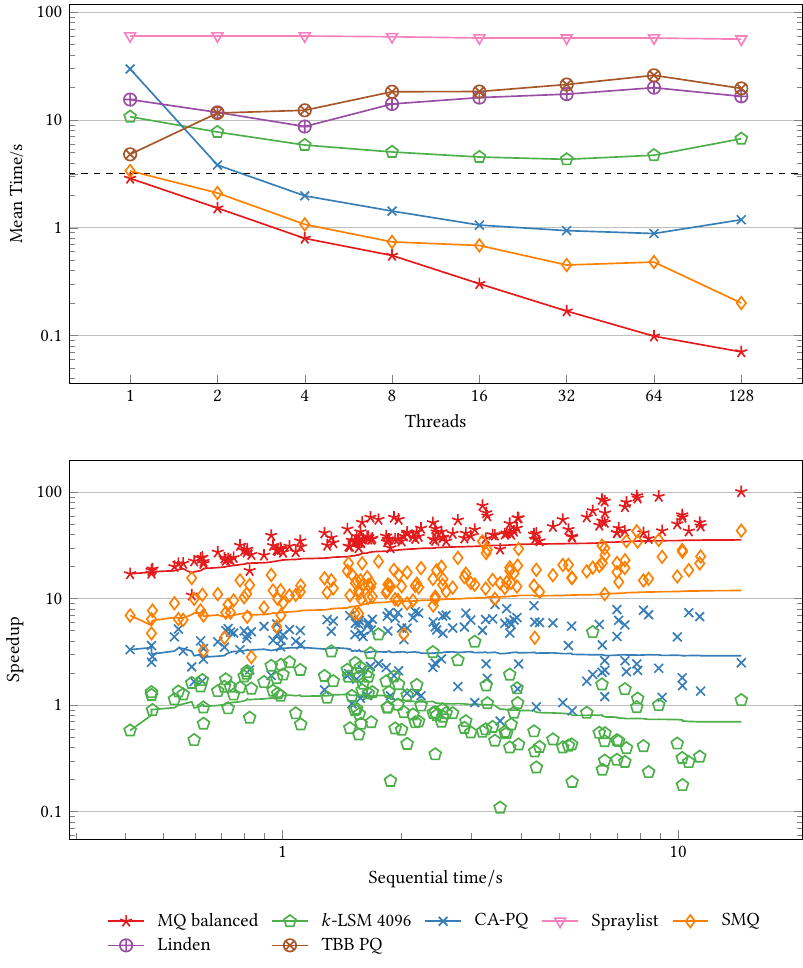}
	\caption{
		Top: Arithmetic mean of times to solve each instance of the knapsack problem with different thread counts on machine \textsf{AMD}.
		If an instance is not solved within the time limit of \qty{60}{\second}, the solving time is set to the time limit.
		The dashed black line indicates the sequential solving time.
		Bottom: Speedup with \num{128} threads for each instance, ordered by the sequential solving time.
		The lines are the cumulative harmonic mean (from left to right) of the speedup.
		Only the best performing relaxed priority queues are shown for clarity.
	}
	\label{fig:compare-knapsack}
\end{figure}
\begin{table}
	\caption{
		Processed nodes and speedup for the knapsack problem with \num{128} threads.
		$r$ is the number of processed nodes, normalized to the number of processed nodes by the sequential algorithm.
		The speedup is computed relative to the sequential algorithm.
	}
	\label{tab:knapsack-stats}
	\begin{tabular}{l | S[table-format=1.2] S[table-format=1.2] S[table-format=2.2] S[table-format=3.2]}
		PQ           & {Geom. mean $r$} & {Max $r$}      & {Harm. mean speedup} & {Max. speedup}   \\
		\hline
		MQ balanced  & \bfseries 1.08   & 2.5            & \bfseries 35.65      & \bfseries 100.84 \\
		$k$-LSM 4096 & 1.12             & \bfseries 2.47 & 0.7                  & 4.85             \\
		CA-PQ        & 1.27             & 4.59           & 2.9                  & 8.8              \\
		SMQ          & 1.11             & 4.71           & 11.92                & 43.45
	\end{tabular}
\end{table}
Figure~\ref{fig:compare-knapsack} shows the average solving times and speedups per instance for the competitors.
For better readability, we only show the balanced MultiQueue configuration and the fastest $k$-LSM variant.
The full results are given in Appendix~\ref{appendix:knapsack}.
Table~\ref{tab:knapsack-stats} shows the aggregated number of processed nodes and speedups for the best performing PQs.

The MultiQueue performs best on average for all thread counts, followed by the SMQ.
This is reflected in the individual speedups, as the MultiQueue achieves the highest speedups with \num{128} threads across almost all instances, averaging \num{35} and reaching up to \num{100} on some instances.
The CA-PQ scales well up to \num{16} threads, after which its performance plateaus.
Even the most relaxed $k$-LSM variant barely scales---presumably due to large PQ sizes---and remains slower than the sequential algorithm for all thread counts.
Similarly, the linearizable PQs do not scale and are slower than the sequential algorithm.
Unfortunately, the Spraylist fails to solve most instances within this time limit of \qty{300}{\second}.
The best-performing PQs exhibit an overhead of about $10\%$ in the number of processed nodes, demonstrating that relaxation is an effective way to speed up branch-and-bound algorithms.
They are multiple orders of magnitude faster than the linearizable competitors and scale well.

\section{Conclusion and Future Work}\label{conclusion}
The MultiQueue demonstrates that the \emph{two-choices paradigm} is a powerful concept for designing relaxed concurrent priority queues with good scalability and accuracy.
Buffering and, more importantly, stickiness help to mitigate poor cache locality and reduce cache-coherence traffic.

Compared to linearizable queues, we observed several orders of magnitude better throughput.
Thus, applications that profit from fine-grained priority-based scheduling of tasks can be effectively parallelized using the MultiQueue.
Still, benchmarking additional workloads, such as producer-consumer scenarios, and more complex applications might be interesting for better understanding the capabilities of the MultiQueue.

Although we have focused on locked $k$-ary heaps, the MultiQueue can be combined with a wide range of internal queues for various effects.
Exploiting integer keys or relaxing the internal queues to simple bucket queues can further accelerate operation, e.g., see \cite{zhangMultiBucketQueues2024}.
Using lock-free internal queues (and buffers) renders the MultiQueue deterministically lock-free in addition to being probabilistically wait-free.

Our (over)simplified analysis goes some way to explaining the good performance and quality of the MultiQueue.
Although a tighter analysis of the basic (non-sticky) MultiQueue is now available \cite{walzerSimpleExactAnalysis2024}, several questions remain open:
What quality bounds can be proven when lifting all simplifying assumptions (hopefully, the result from \cite{walzerSimpleExactAnalysis2024} still applies), and what is the effect of the stickiness parameter $s$ (one would expect a linear dependence of the rank error).

Our analysis of branch-and-bound algorithms exemplifies that, orthogonal to the inner workings of the queues, the concept of rank error and delay helps us to analyze specific applications.
It seems interesting to extend this to further applications, e.g., shortest path computations, discrete event simulations, or computing Huffman codes.

The MultiQueue design we present in this work has some limitations that could be addressed in future work.
The quality of the MultiQueue is not adaptive to the workload and its performance degrades with very few elements in the PQs.
Further, the MultiQueue is limited to a fixed number of internal PQs, thus limiting the maximum number of threads that can concurrently access the PQ.
A promising first step to address these limitations would be to adapt the stickiness period or the number of deletion candidates dynamically depending on the workload and contention level, similar to the CA-PQ.
Dynamically adjusting the number of PQs is more challenging but still an interesting direction to explore.

The conceptual simplicity and good scalability of the MultiQueue also make it an attractive candidate for a GPU implementation.
Despite current GPUs mostly use batch parallelism, it would be interesting to see whether a design based on the MultiQueue might be useful for asynchronous, massively parallel, priority-driven computing on GPUs.

\clearpage
\bibliographystyle{ACM-Reference-Format}
\bibliography{paper}
\clearpage
\appendix
\section{Throughput vs. Quality}\label{appendix:throughput_quality}
\begin{figure}[ht]
	\includegraphics{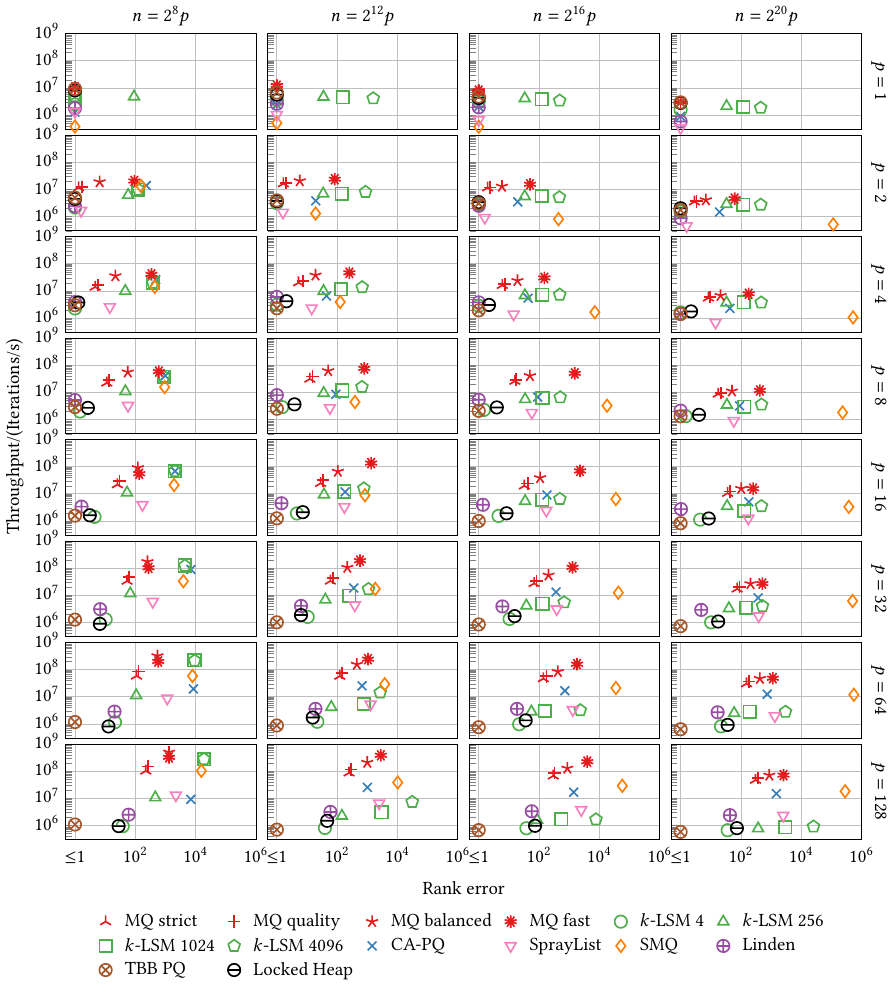}
	\caption{
		Mean rank error vs. throughput for the monotonic stress test with different pre-fills $n$ on machine \textsf{AMD}.
		Each row corresponds to a different thread count ($p$) from $1,2,4,\ldots,128$ and each column to a different pre-fill.
		Tick labels are omitted due to space limitations.
	}
	\label{fig:comparison-throughput-quality-full}
\end{figure}
\section{Throughput Scaling}\label{appendix:scaling}
\begin{figure}[ht]
	\includegraphics{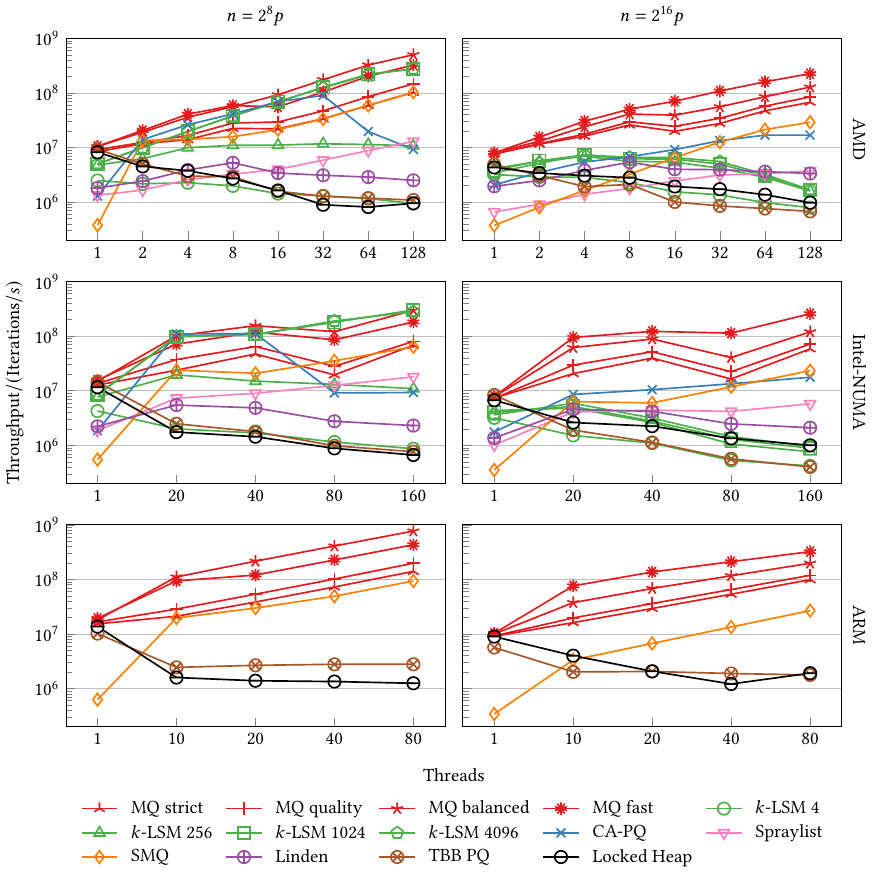}
	\caption{
		Weak scaling experiment showing throughput for the monotonic stress test versus the number of threads on different machines.
		Each row corresponds to a different machine and each column to a different pre-fill.
	}
\end{figure}
\clearpage
\section{Insert-Delete Stress Test}\label{appendix:insdel}
\begin{figure}[ht]
	\includegraphics{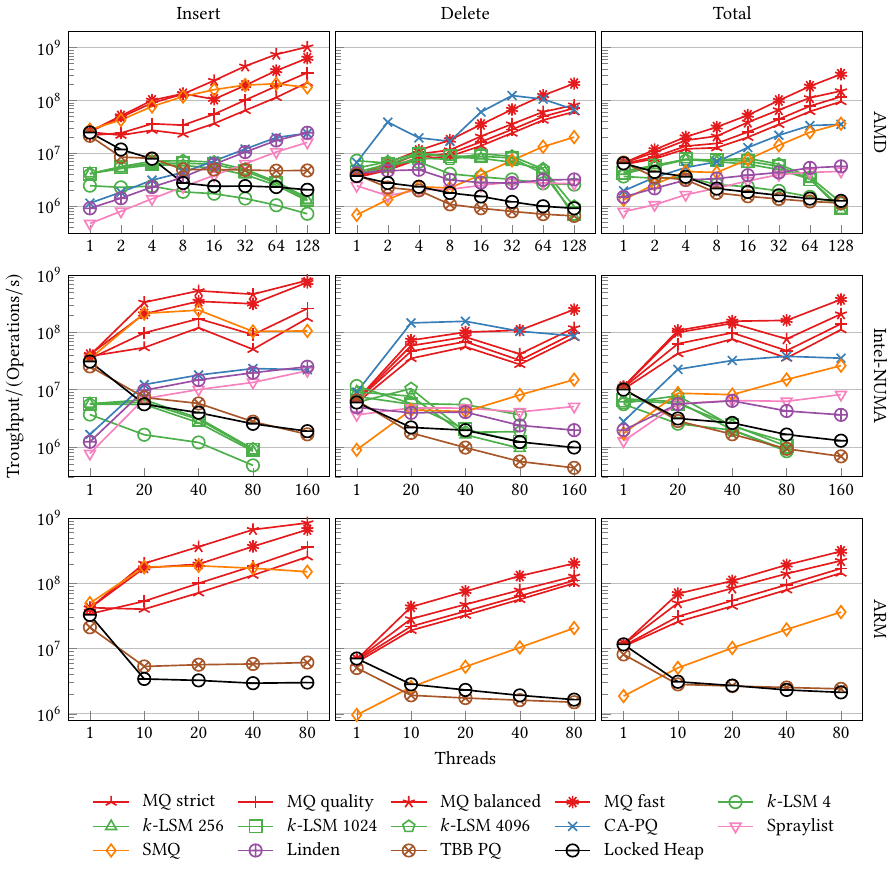}
	\caption{
		Throughput comparison of the insert-delete stress test with $2^{20}$ elements per thread on different machines.
		Data points for the $k$-LSM variants on machine \textsf{Intel-NUMA} for \num{160} threads are missing due to crashes.
	}
\end{figure}
\clearpage
\section{Single-Source Shortest Path}\label{appendix:sssp}
\begin{figure}[ht]
	\includegraphics{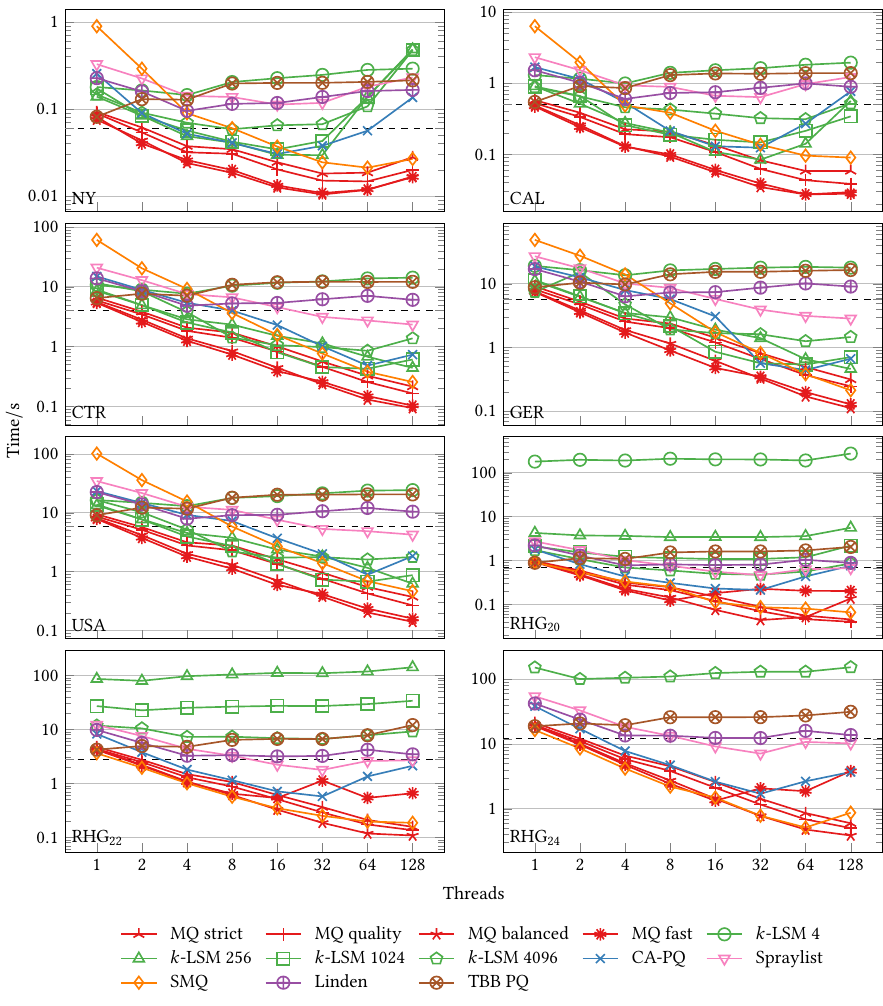}
	\caption{
		Time to solve the SSSP problem with different number of threads on machine \textsf{AMD}.
		The black dashed line is the solving time of the sequential algorithm.
        Missing data points are due to timeouts after \qty{300}{\second}.
	}
\end{figure}
\section{Knapsack}\label{appendix:knapsack}
\begin{figure}[ht]
	\includegraphics{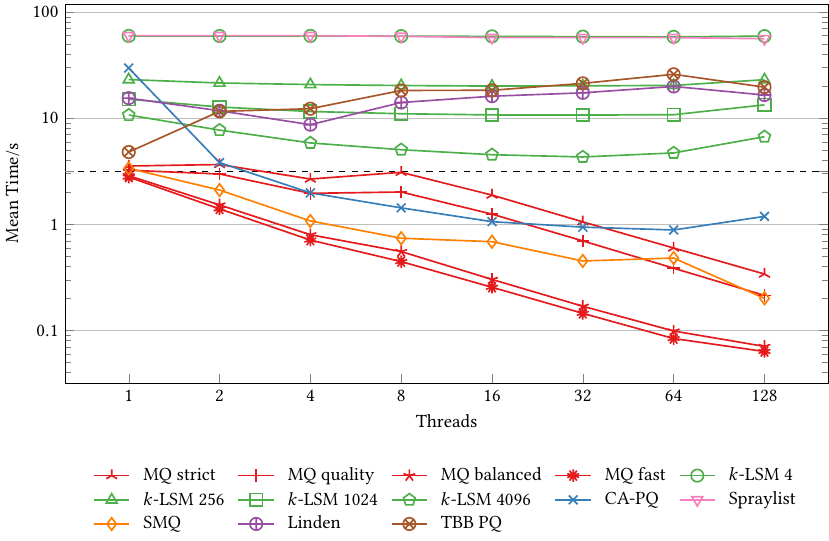}
	\caption{
		Arithmetic mean of times to solve each instance of the knapsack problem with different thread counts on machine \textsf{AMD}.
		If an instance is not solved within the time limit of \qty{60}{\second}, the time limit is assumed.
		The dashed black line indicates the sequential solving time.
	}
\end{figure}
\end{document}